\def\optzero{\ifcase\temp 0\fi}
\def\miesiaca{\ifcase\month\or stycznia\or lutego\or marca\or kwietnia\or maja
                           \or czerwca\or lipca\or sierpnia\or wrze\'snia
                           \or pa\'zdziernika\or listopada\or grudnia\fi}
\newtheorem{theorem}{Theorem}
\newtheorem{lemma}[theorem]{Lemma}
\newtheorem{example}[theorem]{Example}
\newtheorem{proposition}[theorem]{Proposition}
\newtheorem{remark}[theorem]{Remark}
\newcommand\ciut{\hspace{.3mm}}
\def\<{\langle\,}
\def\>{\,\rangle}
\def\To{\Rightarrow}\def\Ot{\Leftarrow}
\def\oto{\leftrightarrow}
\newcommand{\rarrow}{\twoheadrightarrow}
\newcommand\FV[1]{\mbox{\rm FV}(#1)\xspace}
\def\pusty{\varnothing}
\definecolor{darkbrown}{rgb}{0.4, 0.26, 0.13}
\newcommand\pspace{\mbox{\sc Pspace}\xspace}
\newcommand\np{\mbox{\sc NP}\xspace}
\newcommand\conp{\mbox{{\it co}-\sc NP}\xspace}
\newcommand\ipc{\mbox{\rm IPC}\xspace}
\newcommand\iipc{\mbox{\rm IIPC}\xspace}
\newcommand\tg[1]{\mathrm{tg}(#1)}
\newcommand{\A}{\mathcal{A}\xspace}
\newcommand{\M}{\mathcal{M}}
\newcommand{\I}{\mathcal{I}}
\newcommand{\C}{\mathcal{C}}
\def\wrt{with respect to\xspace}
\def\iff{if and only if\xspace}
\newcommand{\czek}[3]{\mbox{\tt check}\;#1;\,\mbox{\tt set}\;#2;\,{\tt
    jmp}\;#3}
\newcommand{\jmpone}{{\mathtt{jmp}\;}\xspace}
\newcommand{\jmp}[2]
{\jmpone#1\;\mbox{\tt and}\;#2}
\def\przypadki#1#2#3{\left\{\begin{array}{ll}
#1,&\mbox{if~#2;}\\
                              #3,&\mbox{otherwise}.\end{array}\right.}
\newcommand\TG[1]{{\sf TG}(#1)}
\newcommand\trejs{{\it tr\/}}
\def\forces{\Vdash} \def\notforces{\nVdash}
\newcommand\lip{{\tt p}}\newcommand\liq{{\tt q}}
\newcommand\lir{{\tt r}}
\newcommand\nip{\overline{\tt p}}
\newcommand{\troche}{\hspace{.13332em}}
\renewcommand\:{\troche{:}\troche}
\def\przypadki#1#2#3{\left\{\begin{array}{ll}
#1,&\mbox{if~#2;}\\
#3,&\mbox{otherwise}.
\end{array}\right.}
\newif\ifmydraft
\newcommand{\rank}{order\xspace}
\title{Between proof construction and SAT-solving}
\author{Aleksy Schubert}
\address{%
  Faculty of Mathematics, Informatics and Mechanics\\
  University of Warsaw\\
  ul. Stefana Banacha 2\\
  02-097 Warsaw\\
  Poland
}
\email{alx@mimuw.edu.pl}
\thanks{The work was partly supported by IDUB
POB 3 programme at the University of Warsaw.}
\author{Paweł Urzyczyn}
\address{%
  Faculty of Mathematics, Informatics and Mechanics\\
  University of Warsaw\\
  ul. Stefana Banacha 2\\
  02-097 Warsaw\\
  Poland
}
\email{urzy@mimuw.edu.pl}
\author{Konrad Zdanowski}
\address{%
  Cardinal Stefan Wyszyński University in Warsaw\\
  ul. Dewajtis 5\\
  01-815 Warsaw\\
  Poland
}
\email{k.zdanowski@uksw.edu.pl}
\keywords{satisfiability, provability, simply typed lambda calculus, propositional logic}
\begin{document}

\begin{abstract}
 The classical satisfiability problem (SAT) is used as a natural and
 general tool to express and solve combinatorial problems that are
 in~\np.
  We postulate that provability for implicational intuitionistic
  propositional logic
  (\iipc)
  can serve as a similar natural tool to
  express problems in \pspace.
  We demonstrate it by proving two essential results concerning
  the system.
  One is that provability in full
  \ipc
  (with all connectives) can be reduced to provability of implicational
  formulas of order three. Another 
  result is a convenient interpretation
  in terms of simple alternating automata.
  Additionally, we distinguish some natural subclasses of
  \iipc
  corresponding to the complexity classes
	\np and \conp.
\end{abstract}

\maketitle

\section{Introduction}
\label{sec:intro}

Everyone knows 
that 
classical propositional
calculus (CPC) is a~natural language to represent 
combinatorial problems
(see e.g.\ \cite{HoosStutzle05,Roig13}). Various decision problems can be 
easily encoded as instances of the 
formula satisfiability problem (SAT)
and efficiently solved \cite{AlounehASM19,PrasadBG05}.
%

In this paper we would like to turn the reader's attention to the so
far unexploited fact that intuitionistic implicational propositional
calculus (\iipc) \cite{Johansson36} is an interesting propositional
formalism which \relax 
is equally natural and simple in its nature as CPC, yet
stronger in its expressive power. Indeed, while SAT and ASP~\cite{Brewkaetal}
can express \np-complete problems, the decision problem for \iipc is
complete for \pspace. Thus \iipc can \relax{accommodate} a larger class of
problems that may be encoded as formulas and solved using automated or
interactive proof-search. In particular, the Sokoban 
puzzle~\cite{Culberson97,dorzwick99,hearndemaine05}
cannot be solved by means of 
SAT solving, but could be 
encoded in \iipc and examined by an interactive prover. \relax 

Of course the \pspace complexity is
enormous, 
but the general case of \np is \relax{infeasible} anyway. 
And not every polynomial
space computation 
requires exponential time.
We may only solve
``easy'' cases of hard problems, and then the increased expressiveness
of the language can be useful rather than harmful. For example, since
\pspace is closed under complements one can simultaneously attempt to
prove a proposition and to disprove it by proving a dual
one~\cite{urzy-types20}.

What is also important, this approach to \pspace avoids 
adding new syntactical forms such as \relax{Boolean} quantification
of QBF \cite{StockmeyerMeyer73}. 
Moreover, we can 
syntactically distinguish 
subclasses of \iipc 
for which the decision problem is complete for
P, \np and \conp. 

The strength of CPC and SAT-solving is in its conceptual simplicity --
a propositional formula provides a specification of a configuration of
interest while a solution provides a particular configuration that
meets the specification. In the 
case of \iipc, as illustrated below,
we are able to achieve the same
goal. In addition, we obtain one more 
dimension  of expressiveness:  
the proof we build represents the process of constructing the solution.
For instance, a~sequence of moves in the Sokoban game, or a computation
of a machine corresponds to a~sequence of proof steps (in the order
of which the proof is being constructed). 

Indeed, 
interestingly enough, \iipc offers not only the formalism to describe
relationships, but also has a 
procedural view in the form of
proof-search process. Moreover, the proof-search in \iipc does not have
to be less convenient than processing %
SAT instances or
computing in ASP-based paradigm~\cite{Brewkaetal}: normal proof
search (Ben-Yelles algorithm) is intuitive and straightforward,
especially because one can restrict attention to formulas of \relax
order at most three.

The proof-search computational paradigm brings here an interesting,
not always clearly expressed, facet to the Curry-Howard
isomorphism. The Curry-Howard isomorphism states that systems of
formal logic 
and computational calculi 
are in direct correspondence. The two most 
commonly recognised facets  of the correspondence are expressed by
the slogans ``formulas as types'' and ``proofs into programs''.
The third facet %
of the isomorphism brings in the
observation that normalisation of a proof 
is the same as the computation of the corresponding 
program. 
The fourth and frequently overlooked 
side of Curry-Howard is the computational content
of proof construction. Virtually any algorithm can be expressed 
as a~formula of some constructive logic in such a way that 
every proof of the formula is but an execution of the algorithm. 
Yet differently: finding a proof of a~formula (or equivalently
an inhabitant of a~type) is the same as executing a program. 
This way we have a close relationship between 
\emph{proof search} in the realm of logic or \relax
\emph{program synthesis} \cite{kupferman-vardi-bsl99, rehof-vardi14}
in the realm of programming. 

A simple illustration of the paradigm ``proof construction as computation'' 
is reading a logical consequence $\Gamma\vdash\tau$ as a configuration of
a machine (a {\it monotonic automaton\/}).  
Under this reading the proof goal is the internal state, \relax
the assumptions $\Gamma$ represent the memory. Variants of such 
{\it monotonic automata\/} were used in~\cite{aleksybar,trudne}; 
in the present paper we make this automata-theoretic semantics
of (I)\ipc very clear-cut. 

We begin our presentation with Section~\ref{sec:praries} where we fix
notation and recall some basic definitions. 
Then we enter the discussion 
of expressibility of \iipc,
focusing mainly on the fact that the whole expressive strength is in
formulas of order at most~three. In Section~\ref{monaut} %
we demonstrate 
the natural equivalence between proof-search and computation:
 the monotonic automata directly implement the Wajsberg/Ben-Yelles 
inhabitation algorithm for the full \ipc (with all connectives). 
After showing that the
halting problem for monotonic automata is
\relax{\pspace-complete,}
we reduce it to provability in \iipc. 
This yields a~polynomial reduction of the decision problem 
for the full \ipc 
to \iipc formulas of order at most~three. 
It follows from Section~\ref{sec:hierarchy} however, that
the translation does not preserve the equivalence of formulas. \relax
Still our reduction plays a~similar role as that of the 
whole SAT to 3-CNF-SAT. 

In Section~\ref{sec:low-rank} we define two subclasses of low-order 
\iipc
corresponding to the 
complexity classes
\np
and \conp.

We conclude in Section~\ref{sec:conclusions} with a few final comments.

\section{Preliminaries}
\label{sec:praries}

To make the paper self-contained, we introduce here the necessary 
definitions  \relax
and fix the basic notation. This section may be to large
extent skipped and used as a reference. 
A more detailed account of the relevant \relax
notions can be found for instance in \cite{SU06}.

\newcommand{\tyv}{\mathcal{X}\xspace}
\newcommand{\tev}{\Upsilon\xspace}
\newcommand{\FTV}[1]{\mathrm{Atm}(#1)\xspace}
\newcommand{\NF}[1]{\mathrm{NF}(#1)\xspace}
\newcommand{\dom}[1]{\mathrm{dom}(#1)\xspace}
\newcommand{\abs}[1]{|#1|\xspace}
\newcommand{\Cycle}{\mathrm{Cycle}\xspace}
\newcommand{\p}{p\xspace}
\newcommand{\q}{q\xspace}
\newcommand{\Redundant}{\mathcal{R}\xspace}
\newcommand{\vdashstlc}{\vdash_{\hspace{-3pt}{}_\to}\xspace}
\newcommand{\vdashp}{\vdash_{p}\xspace}
\newcommand{\vdashsr}{\vdash_{sr}\xspace}
\newcommand{\vbracket}[1]{\Vert #1\Vert\xspace}
\newcommand{\abracket}[1]{\lceil #1\rceil\xspace}
\newcommand{\parfunc}{\rightharpoonup\xspace}
\newcommand{\Types}{\mathbb{T}\xspace}
\newcommand{\Formulas}{\mathbb{F}\xspace}
\newcommand{\Ctx}{\mathrm{Ctx}\xspace}
\newcommand{\inn}{\mathrm{in}\xspace}

\subsubsection*{Propositional formulas}

We assume an infinite set $\tyv$ of {\it propositional variables\/},
usually written \relax
as $p,q,r,\dots$, possibly with indices. Propositional variables and
the constant~$\bot$ are called {\it atoms}.

The formulas of the full intuitionistic propositional
logic, \ipc, are generated by the grammar:
$$
\varphi,\psi::=p\mid \bot \mid \varphi\to \psi \mid \varphi\land\psi 
\mid  \varphi\lor\psi,
$$
where $p\in\tyv$. 
As usual, we use $\lnot\varphi$ as a shorthand for $\varphi\to\bot$.

For clarity we do not include parentheses in the grammar. We adopt 
standard conventions that parentheses can be used \relax
anywhere to disambiguate understanding of the formula structure.
Additionally, we assume 
that $\to$ is right-associative so
that $\varphi_1\to \varphi_2\to \varphi_3$ is equivalent to
$\varphi_1\to (\varphi_2\to \varphi_3)$. 

We use the notation $\varphi[p:=\psi]$ for substitution. 
If
\relax{$\Gamma = \{ \varphi_1,\ldots,\varphi_n\}$}
then we write
\relax{$\Gamma\to p$}
for the formula
\relax{$\varphi_1\to\cdots\to\varphi_n\to p$.}


A \emph{literal} is either 
a~propositional variable or a negated variable.
Literals are written in typewriter font: $\lip, \liq, \lir,\dots$
If $\lip$ is a literal, then $\nip$ is its
dual literal,\relax
 i.e.~$\overline p=\neg p$ and $\overline{\neg p}=p$. 


\subsubsection*{Proof terms}
According to the Curry-Howard correspondence, formulas can be seen
as types assigned to proof terms. 
In this view, \iipc is exactly the ordinary simply typed lambda-calculus.
For the full \ipc we need an extended calculus and we now define 
the syntax of it. 
 We assume an infinite set
$\tev$ of {\it proof \relax 
variables\/}, usually written as $x, y, z,\dots$ with possible
annotations. A~\emph{context} is a finite set of pairs $x:\varphi$,
where $x$ is a proof 
variable and $\varphi$ is a formula, such that no proof variable occurs twice.
Contexts are traditionally denoted by $\Gamma,\Delta$, etc. 
 If this does not lead
to confusion we identify contexts with sets of formulas (forgetting 
the proof variables). 

We define the Church style (raw) \relax
terms of intuitionistic propositional logic
as follows: 
%
\begin{displaymath}
\begin{array}{l@{\;}l}
  M,N ::= & x\mid
             M[\varphi] \mid 
            \lambda x\:\varphi.\,M \mid
            MN \mid
            \<M,N\> \mid
            M\pi_1 \mid
            M\pi_2 \\ & \mid
            \inn_1 M \mid
            \inn_2 M \mid
            M[x_1\:\varphi.\,N_1;\;x_2\:\psi.\,N_2]
\end{array}  
\end{displaymath}
where $x,x_1,x_2\in\tev$ and $\varphi,\psi\in\Formulas$.  The set of
$\lambda$-terms generated in this way is written $\Lambda^p$.
Again we do not include
parentheses in the grammar, but they can be used anywhere to
disambiguate parsing.  As a~shorthand for
$\lambda x_1\:\sigma_1\ldots \lambda x_n\:\sigma_n.\,M$ we
write $\lambda x_1\:\sigma_1 \ldots x_n\:\sigma_n.\,M$.  
In case this does
not lead to confusion, we omit type annotations from terms and write
for example 
$\lambda x.\,M$ instead of $\lambda x\:\varphi.\,M$ or
$M[x_1.\,N_1;\;x_2.\,N_2]$ for $M[x_1\:\varphi.\,N_1;\;x_2\:\psi.\,N_2]$. 
We also use the common convention that application is left-associative:
$MNP$ stands for $(MN)P$. 
We often write 
e.g.~$ME$ not only for 
application of $N$ to a term~$E$ but also for any elimination:
$E$~can be a projection $\pi_1$ or $\pi_2$, or a $\vee$-eliminator
$[x\:\varphi.\,P;\;y\:\psi.\,Q]$ or a~\mbox{$\bot$-eliminator}~$[\varphi]$.

\relax

The set of
\emph{free variables} in a term is defined as
\begin{itemize}
\item $\FV{x} = \{ x\}$,
\item $\FV{\lambda x\:\varphi.\,M} = \FV{M}\backslash\{ x\}$,
\item $\FV{MN} = \FV{\<M,N\>} = \FV{M}\cup\FV{N}$,
\item $\FV{M\pi_i} = \FV{\inn_i M} = \FV{M[\varphi]} = \FV{M}$ for $i=1,2$,
\item $\FV{M[x\:\varphi.\,N_1;\;y\:\psi.\,N_2]} =
  \FV{M}\cup(\FV{N_1}\backslash\{x\})\cup(\FV{N_2}\backslash\{y\})$.
\end{itemize} 
As usual the terms are tacitly considered up to $\alpha$-conversion so
that the names of non-free variables are not relevant.
\emph{Closed terms} are terms that have no occurrences of
  free variables.
We use
the notation 
$M[x:=N]$ for capture-free substitution of $N$ for the free 
occurrences of $x$ in~$M$.

 \begin{figure}[hbt]
  \centering
$$
\infer[(var)]{\Gamma, x\:\varphi\vdash x:\varphi}{}
$$

$$
\infer[(\to I)]{\Gamma\vdash\lambda x\:\varphi.\,M:\varphi\to \psi}
{\Gamma, x\:\varphi\vdash M:\psi}
$$

$$
\infer[(\to E)]{\Gamma\vdash M_1M_2:\psi}
{\Gamma\vdash M_1: \varphi\to \psi & \Gamma\vdash M_2: \varphi}
$$

$$
\infer[(\land I)]{\Gamma\vdash\<M,N\>:\varphi\land \psi}
{\Gamma\vdash M:\varphi &
\Gamma\vdash N:\psi}
$$

$$
\infer[(\land E1)]{\Gamma\vdash M\pi_1:\varphi}
{\Gamma\vdash M: \varphi\land\psi}
\qquad
\infer[(\land E2)]{\Gamma\vdash M\pi_2:\psi}
{\Gamma\vdash M: \varphi\land\psi}
$$

$$
\infer[(\lor I1)]{\Gamma\vdash \inn_1 M:\varphi\lor\psi}
{\Gamma\vdash M:\varphi}
\qquad
\infer[(\lor I2)]{\Gamma\vdash \inn_2 M:\varphi\lor\psi}
{\Gamma\vdash M:\psi}
$$

$$
\infer[(\lor E)]{\Gamma\vdash  
M[x\:\varphi.\,N_1;\,y\:\psi.\,N_2]:\rho}
{\Gamma\vdash  M: \varphi\lor\psi & \Gamma,x\:\varphi\vdash N_1: \rho &
\Gamma,y\:\psi\vdash N_2: \rho}
$$

$$
\infer[(\bot E)]{\Gamma\vdash M[\varphi]:\varphi}
{\Gamma\vdash M:\bot }
$$
\caption{Proof assignment in \ipc.}
\label{fig:rules}

\end{figure}

The natural deduction inference rules of \ipc are 
presented in Figure~\ref{fig:rules} in the form of type-assignment system
deriving judgements of the form
$\Gamma\vdash M:\varphi$ 
(read: \mbox{``$M$ has type~$\varphi$ in~$\Gamma$\ciut''}
or ``$M$ proves \mbox{$\varphi$ in~$\Gamma$\ciut''}), 
where $\Gamma$ is a~context and $M$ is 
a~proof term. From time to time we use the simplified notation 
$\Gamma\vdash \sigma$ to state that $\Gamma\vdash M:\sigma$
holds for some~$M$. If $\Gamma$ is known, implicit, or irrelevant 
we can simplify the statement 
$\Gamma\vdash M:\tau$ to $M:\tau$ (read ``$M$ has type~$\tau$'').

\subsubsection*{Reductions}

An introduction-elimination pair constitutes a {\it beta-redex\/},
and we have the following set of beta-reduction rules for all the
logical connectives except~$\bot$:
\begin{displaymath}
  \begin{array}{l@{\;\;}c@{\;\;}l}
    (\lambda x\:\sigma. M) N &\To_\beta & M[x := N],\\
    \<M, N\>\pi_1 &\To_\beta & M,\\
    \<M, N\>\pi_2 &\To_\beta & N,\\
    (\inn_1 M)[x\:\varphi.\,N_1;\;y\:\psi.N_2] 
&\To_\beta & N_1[x:= M] ,\\
    (\inn_2 M)[x\:\varphi.\,N_1;\;y\:\psi.\,N_2] 
&\To_\beta & N_2[y:= M].\\
  \end{array}
\end{displaymath}
Other redexes represent elimination steps applied to
a conclusion of a $\vee$- or $\bot$-elimination. \relax
The following rules, called 
{\it permutations\/} (aka commuting conversions), 
permute the ``bad'' elimination upwards. For the disjunction
there is the following general scheme:
$$
M[x\:\varphi.\,N_1;\;y\:\psi.\,N_2]E ~~\To_p~~
M[x\:\varphi.\,N_1E;\;y\:\psi.\,N_2E],
$$ 
where $E$ is any eliminator, i.e., $E\in \Lambda^p$, or $E\in 
\{\pi_1, \pi_2\}$, $E =[\vartheta]$, or
$E=[z\:\vartheta.\,N;\;v\:\varrho.\,Q]$.

Permutations for $M[\varphi]$ follow the pattern
$$
M[\varphi]E  ~~\To_p~~ M[\psi],
$$
where $\psi$ is the type of~$M[\varphi]E$. For example: 
$$
M[\varphi\To\psi]N \To_p M[\psi].
$$


The relation $\to$ is the contextual closure of rules $\To_\beta$ and 
$\To_p$, and $\rarrow$ stands for the 
reflexive-transitive closure of~$\to$.

The system $\Lambda^p$ has a number of important consistency 
features. \relax 
\bigbreak

\begin{theorem}\label{theorem:basic}
The system $\Lambda^p$ has the following properties: \relax 
  \begin{enumerate}
 \item\label{srbasic} 
Subject reduction: 
if $\Gamma\vdash M:\sigma$ and $M\rarrow N$ then
    $\Gamma\vdash N:\sigma$.
  \item\label{crbasic} 
Church-Rosser property: 
if $M\rarrow N$ and $M\rarrow P$ then
    there is a~term $Q$ such that $N\rarrow Q$ and $P\rarrow Q$.
  \item\label{snbasic}
Strong normalisation: 
every reduction $M_1\to M_2\to\cdots$ is finite.
  \end{enumerate}
\end{theorem}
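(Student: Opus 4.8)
The plan is to establish the three properties in the order that minimises work: subject reduction first, then strong normalisation, and finally to derive the Church--Rosser property from strong normalisation together with local confluence via Newman's Lemma. All three are standard for typed $\lambda$-calculi with sums and products, and a detailed treatment can be found in \cite{SU06}; here I only describe the structure.

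For subject reduction (part~\ref{srbasic}) I would first reduce to the one-step case: since $\rarrow$ is the reflexive--transitive closure of $\to$, it suffices to show that $\Gamma\vdash M:\sigma$ and $M\to N$ imply $\Gamma\vdash N:\sigma$, and then iterate along the reduction sequence. The one-step case is proved by induction on the typing derivation, the crux being the base redexes. The $\beta$-rules rest on a Substitution Lemma: if $\Gamma,x\:\varphi\vdash M:\psi$ and $\Gamma\vdash N:\varphi$ then $\Gamma\vdash M[x:=N]:\psi$ (itself a routine induction on $M$). An Inversion (generation) lemma, reading off the possible premises of the last rule from the shape of the typed term, handles both the $\beta$-cases for $\pi_i$ and $\vee$-elimination and all the permutation rules $\To_p$; for a permutation one checks that pushing an eliminator $E$ inside a $\vee$- or $\bot$-elimination leaves the conclusion type unchanged.

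Strong normalisation (part~\ref{snbasic}) is the genuinely hard part, and I would prove it by the Tait--Girard computability (reducibility) method. One defines, by induction on types, a family of sets of computable terms: at an atom (and at $\bot$) computable means strongly normalising; at $\varphi\to\psi$ it means sending computable arguments of type $\varphi$ to computable results; at $\varphi\land\psi$ it means having both projections computable. One then verifies the usual candidate conditions (computable terms are SN; computability is preserved under reduction; a neutral term all of whose one-step reducts are computable is itself computable), and finally proves that every typable term is computable once its free variables are replaced by computable terms, by induction on the derivation; SN follows since computable $\Rightarrow$ SN.

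The main obstacle lies in the disjunction together with the permutation reductions. The sum type admits no clean elimination into a fixed context, so the computability clause for $\varphi\lor\psi$ and the verification of the candidate conditions must be arranged so as to absorb the commuting conversions $\To_p$; this is the delicate point where the argument departs from the purely $\beta$ case, and it is exactly the step carried out in detail in \cite{SU06}. Once strong normalisation is in hand, the Church--Rosser property (part~\ref{crbasic}) follows from Newman's Lemma: it remains only to check local confluence, i.e.\ that any two single-step reducts of a term can be joined, which is a finite inspection of the overlaps between $\beta$- and $p$-redexes. As $\to$ is strongly normalising and locally confluent, it is confluent, which is precisely the stated property for $\rarrow$.
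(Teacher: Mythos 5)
Your proposal is correct, and parts~\ref{srbasic} and~\ref{snbasic} match the paper in substance: subject reduction is a routine induction (the paper simply observes that each rule preserves typing), and for strong normalisation both you and the paper ultimately defer to the literature --- the paper cites De~Groote's proof for natural deduction with permutative conversions, while you sketch the Tait--Girard reducibility argument and correctly flag the treatment of $\lor$ under the commuting conversions as the delicate point. Where you genuinely diverge is part~\ref{crbasic}: you derive confluence from strong normalisation plus local confluence via Newman's Lemma, whereas the paper obtains it directly from general results on higher-order rewriting, observing that the rules are left-linear and non-overlapping (overlaps such as $(\inn_1 M)[x.N_1;y.N_2]E$ occur only at metavariable positions and so do not count as critical pairs), hence the system is orthogonal and confluent regardless of termination. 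Your route is perfectly sound --- the residual pairs, e.g.\ $N_1[x:=M]E$ versus $(\inn_1 M)[x.N_1E;y.N_2E]$, do join --- but it makes Church--Rosser depend on the hardest part of the theorem and requires an explicit (if finite) joinability check in a setting with binders; the paper's orthogonality argument buys independence of the two properties and replaces the case analysis with a one-line syntactic criterion, at the cost of invoking heavier rewriting machinery.
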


\begin{proof} Part~\ref{srbasic} can be easily verified by observing that
every reduction rule preserves typing.
Part~\ref{crbasic} follows from general results on higher-order 
rewriting~\cite[Chapter 11.6]{terese}, 
because 
the rules are left-linear and non-overlapping.
For part~\ref{snbasic}, see~\cite{degroote-iandc02}.
\end{proof}

A type $\tau$ is {\it inhabited\/} iff there is a closed term $M$ 
such 
that $\vdash M:\tau$ (an {\it inhabitant\/}). 

\subsubsection*{Long normal forms}

It follows from~Theorem~\ref{theorem:basic}(\ref{snbasic}) that 
every inhabited type has a normal inhabitant. To organize and
narrow proof search it is convenient to use 
a stricter notion of long normal form (lnf). In the lambda-calculus
(or equivalently: in natural deduction) long normal forms play
a role similar to {\it focusing\/}~\cite{MillerNPS91,LIANG20094747} in sequent
calculus.

We say that a term $M$ such that
$\Gamma\vdash M:\varphi$ is in \emph{long normal
  form} when one of the following cases holds:

\begin{itemize}
\item $M$ is a constructor 
  $\lambda x.\, N$, $\<N_1,N_2\>$, $\inn_1 N$, or $\inn_2 N$, where
  terms %
  $N, N_1,$ and $N_2$ are lnf.
\item $M = xE_1\ldots E_n$, where $E_1,\ldots,E_n$ are projections or
terms in long normal form, and $\varphi$ is an atom.
\item $M = xE_1\ldots E_nE$, where $E_1,\ldots,E_n$ are 
projections or terms in long normal form, and $E$ is a $\vee$-
or $\bot$-eliminator, and $\varphi$ is either an atom 
or a disjunction.
\end{itemize}
For example, let
$$
\Gamma=\{x\:\alpha\to p,\, y\:\alpha,\,
z\:\alpha\to\beta\vee\gamma,\,
u_1\:\beta\to p,\, u_2\:\gamma\to p\},
$$
where $p$ is an atom. 
Then $\lambda w\:\alpha.\,xw$ is an lnf of type 
$\alpha\to p$, and $zy[v_1\:\beta.\,u_1v_1;\; v_2\:\gamma.\,u_2v_2]$ 
 is an lnf of type~$p$. Also 
$zy[v_1\:\beta.\,\inn_1{v_1};\; v_2\:\gamma.\,\inn_2{v_2}]$
is an lnf of type $\beta\vee\gamma$, while the mere application
$zy$ is not. 


\begin{lemma}[\cite{games}]\label{trzypotrzy} If $\Gamma\vdash \varphi$, then 
$\Gamma\vdash M:\varphi$, for some long normal form~$M$.
\end{lemma}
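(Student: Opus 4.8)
The plan is to reduce an arbitrary proof to the required shape in two moves: first replace it by a \emph{normal} inhabitant, and then ``lengthen'' (eta-expand) that inhabitant into a long normal form of the same type. Fixing some $M_0$ with $\Gamma\vdash M_0:\varphi$, strong normalisation (Theorem~\ref{theorem:basic}(\ref{snbasic})) guarantees a normal $N$ with $M_0\rarrow N$, and subject reduction (Theorem~\ref{theorem:basic}(\ref{srbasic})) keeps $\Gamma\vdash N:\varphi$. Since $\rarrow$ is the closure of both $\To_\beta$ and $\To_p$, this $N$ is free of beta- and permutation-redexes. It therefore remains to prove the purely structural statement: every normal $N$ with $\Gamma\vdash N:\varphi$ can be transformed, within the same context and type, into a long normal form.

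I would obtain the long normal form by a lengthening operation $\eta_\varphi(N)$ defined by recursion on the shape of $\varphi$ and of $N$. If $N$ is a constructor ($\lambda$-abstraction, pair, or injection) I simply recurse into its immediate subterms. If $N$ is a neutral term, i.e.\ a spine $xE_1\dots E_k$ (where the last $E_k$ may be a $\vee$- or $\bot$-eliminator), I split on $\varphi$: when $\varphi$ is an atom there is nothing to do beyond lengthening the arguments $E_i$, giving a term of the shape required by the definition of long normal form; when $\varphi=\sigma\to\tau$ I abstract, producing $\lambda z\:\sigma.\,\eta_\tau(Nz)$; when $\varphi=\sigma_1\land\sigma_2$ I build a pair $\<\eta_{\sigma_1}(N\pi_1),\eta_{\sigma_2}(N\pi_2)\>$; and when $\varphi=\sigma_1\lor\sigma_2$ I append, to a plain spine, a case-eliminator carrying injections in its branches, namely $N[z_1\:\sigma_1.\,\inn_1\eta_{\sigma_1}(z_1);\,z_2\:\sigma_2.\,\inn_2\eta_{\sigma_2}(z_2)]$, while a spine that already ends in an eliminator is lengthened inside its branches instead. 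That the result has type $\varphi$ in $\Gamma$ is immediate, since each clause merely instantiates the introduction and elimination rules of Figure~\ref{fig:rules}.

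The delicate point — and the one I expect to be the main obstacle — is that this operation must yield a genuine long normal form, that is, it must neither leave nor create any beta- or permutation-redex. The trouble comes precisely from the $\vee$- and $\bot$-eliminators: by the definition a spine ending in such an eliminator may have only atomic or disjunctive type, so a neutral $N$ that already ends in a case- or absurdity-eliminator but has functional or conjunctive type cannot be abstracted or paired naively. Writing $\lambda z.\,Nz$ (or $\<N\pi_1,N\pi_2\>$) would re-create a permutation redex $N[\,\cdots\,]z\To_p N[\,\cdots z;\cdots z\,]$; the correct move is to push the freshly introduced eliminator $z$ (respectively $\pi_i$) through the head eliminator into its branches and continue lengthening there, repeating until the branches reach atomic or disjunctive type. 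Thus $\eta$ must be interleaved with the permutation rules, and one must check that the outcome still matches the grammar and that no redex survives.

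Finally, termination cannot be argued by a plain structural recursion, since abstracting or pairing a neutral term enlarges it while pushing eliminators into branches multiplies subterms. The natural device is a well-founded measure attached to the positions still to be lengthened — a first candidate being the multiset of the heights of the types carried by the not-yet-long neutral subterms — together with a proof that every step strictly decreases it. Because a permutation push replaces one subterm of a given type by several subterms of the \emph{same} type, this first candidate needs refining, e.g.\ by combining the type-heights with a component controlling the commuting conversions, whose termination is itself an instance of Theorem~\ref{theorem:basic}(\ref{snbasic}). Getting a single measure that dominates both the type-driven expansions and the permutation pushes is the real work; the remainder is routine case bookkeeping over the rules of Figure~\ref{fig:rules}.
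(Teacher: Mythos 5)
The paper does not actually prove this lemma --- it is imported from \cite{games} with no argument given --- so there is no in-house proof to measure you against. Your overall route (normalise using Theorem~\ref{theorem:basic}, then $\eta$-lengthen the normal inhabitant, pushing the freshly created eliminators through $\vee$- and $\bot$-eliminators so as not to recreate permutation redexes) is the standard one and is correct in outline; you have also correctly located the point where the definition of long normal form forces the expansion and the commuting conversions to be interleaved.

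The gap is exactly where you admit it is, and it is not closed by the hints you give. First, your candidate measure (the multiset of type heights at the positions still to be lengthened) already fails at the \emph{atomic} base case, not only at the permutation pushes: when you lengthen a spine $xE_1\ldots E_n$ of atomic type $p$, the new obligations are the term arguments $E_i$ at their own types $\sigma_i$, and these can be arbitrarily larger than $p$ (take $x\:((a\to b)\to c)\to p$). So the type component must there be dominated by a term-size component, while in the abstraction and pairing clauses the term grows and only the type shrinks; no single one of your two components works, and the refinement you gesture at addresses the wrong failure. The usual resolution is not one global measure but a two-level recursion: a main function by induction on the normal term, and, in the neutral case, an auxiliary expansion $\eta_\varphi$ by induction on $\varphi$ whose atomic/disjunctive base case hands the strictly smaller spine arguments and case branches back to the main function. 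Second, pushing an application or projection into a case branch whose body is a constructor creates a $\beta$-redex (the branch becomes $P_iz$ with $P_i=\lambda w.\,P_i'$), so your $\eta_\tau(Nz)$ is being applied to a term that is no longer normal; you must either contract these administrative redexes on the fly (substituting a variable for a variable, which preserves normality and does not increase size) or define the push directly on the branch bodies. You interleave $\eta$ with the permutations but not with $\beta$. Until these two points are repaired, the construction is not known to terminate nor to land inside the grammar of long normal forms.
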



\subsubsection*{Kripke semantics}

A Kripke model is a triple of the form
$$
\C = \<C,\leq,\forces\>
$$
where $C$ is a non-empty set, the elements of which are called \emph{states},
$\leq$ is a partial order in $C$ and $\forces$ is a binary relation
between elements of~$C$ and propositional variables. The relation
$\forces$, 
read as \emph{forces}, obeys the standard monotonicity
condition: if $c\leq c'$ and $c\forces p$ then $c'\forces p$. 
Without loss of generality we may 
assume that $C$ is finite,~cf.~\cite{Smorynski73}, 
\cite[Section 3]{VanDalen1986}.

Kripke semantics for \ipc is defined
as follows. If $\C = \<C,\leq,\forces\>$  is a Kripke model then
\begin{itemize}
\item $c\forces \varphi\lor\psi$ \iff $c\forces\varphi$ or
  $c\forces\psi$,
\item $c\forces \varphi\land\psi$ \iff $c\forces\varphi$ and
  $c\forces\psi$,
\item $c\forces \varphi\to\psi$ \iff for all $c'\geq c$
  if $c'\forces\varphi$ then $c'\forces\psi$,
\item $c\forces\bot$ does not hold.
\end{itemize}
We write $c\forces\Gamma$, when $c$ forces all formulas in~$\Gamma$.
And $\Gamma\forces\varphi$ means that $c\forces\Gamma$ implies 
$c\forces\varphi$ for each Kripke model $\<C,\leq,\forces\>$ and 
each $c\in C$.

The following completeness theorem holds (see e.g.~\cite{fitm69}):
\begin{theorem}
For each $\Gamma$ and 
$\varphi$, it holds that
  $\Gamma\vdash\varphi$ \iff $\Gamma\forces\varphi$.
\end{theorem}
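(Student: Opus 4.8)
The statement combines two implications. The left-to-right direction (soundness) is proved by a routine induction on the derivation of $\Gamma\vdash M:\varphi$, checking that every rule of Figure~\ref{fig:rules} is validated by the semantic clauses. The only rule demanding care is $(\to I)$: to pass from $\Gamma,\varphi\forces\psi$ to $\Gamma\forces\varphi\to\psi$ one fixes a state $c\forces\Gamma$ and an arbitrary $c'\geq c$ with $c'\forces\varphi$; monotonicity of $\forces$ gives $c'\forces\Gamma$, and the induction hypothesis yields $c'\forces\psi$. The rule $(\lor E)$ uses the disjunction clause to split into two cases, and $(\bot E)$ is vacuous because no state forces~$\bot$.

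The substantive direction is completeness, which I would prove in contrapositive form: if $\Gamma\not\vdash\varphi$ then some Kripke model has a state forcing $\Gamma$ but not $\varphi$. The plan is to build the \emph{canonical model}. Call a set of formulas $\Phi$ a \emph{prime theory} if it is deductively closed (whenever $\Phi\vdash\psi$ we have $\psi\in\Phi$), consistent ($\bot\notin\Phi$), and enjoys the disjunction property (if $\varphi_1\lor\varphi_2\in\Phi$ then $\varphi_1\in\Phi$ or $\varphi_2\in\Phi$). Let the states of $\C$ be all prime theories, ordered by inclusion, with $\Phi\forces p$ holding exactly when $p\in\Phi$; monotonicity is then immediate.

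Two lemmas drive the argument. The first is a Lindenbaum-style extension lemma: if $\Phi\not\vdash\psi$ then $\Phi$ extends to a prime theory $\Phi'$ with $\psi\notin\Phi'$. One enumerates all disjunctions and, at each step, either leaves the current set intact or absorbs one of the two disjuncts, always maintaining that $\psi$ stays underivable; the disjunction property of the limit falls out of this bookkeeping. The second is the \emph{Truth Lemma}: for every formula $\psi$ and every prime theory $\Phi$, one has $\Phi\forces\psi$ \iff $\psi\in\Phi$, proved by induction on~$\psi$.

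I expect the implication case of the Truth Lemma to be the main obstacle, and it is precisely where the extension lemma is reused. For the nontrivial direction one assumes $\varphi_1\to\varphi_2\notin\Phi$; deductive closure gives $\Phi\cup\{\varphi_1\}\not\vdash\varphi_2$, so the extension lemma supplies a prime $\Phi'\supseteq\Phi\cup\{\varphi_1\}$ with $\varphi_2\notin\Phi'$, and the induction hypothesis turns this into a state above $\Phi$ that forces $\varphi_1$ but not $\varphi_2$, witnessing $\Phi\notforces\varphi_1\to\varphi_2$. The disjunction and $\bot$ cases use primeness and consistency respectively. With both lemmas in hand the theorem follows: from $\Gamma\not\vdash\varphi$ the extension lemma yields a prime $\Phi\supseteq\Gamma$ with $\varphi\notin\Phi$, and the Truth Lemma makes $\Phi$ force every formula of $\Gamma$ but not $\varphi$. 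This single (possibly infinite) model already witnesses $\Gamma\notforces\varphi$, so it suffices for the theorem; the earlier remark that one may restrict to finite $C$ is then recovered separately by a filtration of the canonical model, as in the cited references.
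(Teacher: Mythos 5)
The paper does not actually prove this theorem; it states it as a standard result and defers to the literature (``see e.g.~\cite{fitm69}''). Your argument --- soundness by induction on the derivation, and completeness via the canonical model of prime theories with a Lindenbaum-style extension lemma and a Truth Lemma --- is precisely the standard proof found in those references, and it is correct as written (including the observation that finiteness of the model is recovered separately by filtration).
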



\subsubsection*{The implicational fragment}
In this paper we are mostly interested in the implicational fragment \iipc
of \ipc. The formulas of \iipc (also known as simple types) are \relax
defined by the grammar 
$$
\sigma,\tau::=p\mid  \sigma\to \tau,
$$
where $p\in\tyv$. 

Any %
formula in \iipc can be written as 
\mbox{$\sigma = \sigma_1\to\cdots\to \sigma_n\to p$,} where $n\geq 0$, and $p$
is a~type atom. Types $\sigma_1,\ldots,\sigma_n$ are the {\it arguments\/},
and the atom $p$ is called the \emph{target} of $\sigma$, \relax
written $p=\tg{\sigma}$. 

The \emph{\rank}
$r(\sigma)$ %
of an implicational formula 
is defined as follows: an atom is of \rank 0, and the \rank of
\relax{$\sigma\to\tau$}
is the maximum of
\relax{$r(\tau)$ and $r(\sigma)+1$.}
In
other words, if $p$ is an atom, then
$$
r(\sigma_1\to\cdots\to\sigma_k\to p)= 1+\max_ir(\sigma_i).
$$


The restricted set $\Lambda_\to$ of \iipc proof-terms is defined by the 
pseudo-grammar: \relax\relax{}
\begin{displaymath}
\begin{array}{l@{\;}l}
  M,N ::= & x\mid
            \lambda x\:
            \sigma.\,M \mid
            MN.
\end{array}  
\end{displaymath}
The relevant rules in Figure~\ref{fig:rules} are 
$(var), (\to I),$ and $(\to E)$, i.e., the type-assignment rules
of the ordinary simply typed lambda-calculus.

\section{Automata for logic}\label{monaut}
It follows from Lemma~\ref{trzypotrzy} that every
provable formula has a long normal proof. This yields a simple 
proof-search algorithm, which is essentially implicit in~\cite{Wajsberg38},
and
hence
called the Wajsberg algorithm
(WA).\footnote{See~\cite{Bezhanishvili1987} for a correction of
the proof in~\cite{Wajsberg38}.}

To present the algorithm we first define the set~$\TG\varphi$ 
of {\it targets of~$\varphi$\/}. Targets are always atoms or disjunctions.


\begin{itemize}[noitemsep,nolistsep]
\item $\TG{\bf a}=\{{\bf a}\}$, when ${\bf a}$ is an atom
(a propositional variable or $\bot$).
\item $\TG{\tau\to\sigma}=\TG\sigma$.
\item $\TG{\tau\vee\sigma}=\{\tau\vee\sigma\}$.
\item $\TG{\tau\wedge\sigma}=\TG\tau\cup\TG\sigma$.
\end{itemize}
Clearly, $\TG{\varphi}=\{\tg\varphi\}$, when $\varphi$ is an implicational
formula.
%

We define
the family $\trejs(\alpha,\varphi)$ 
of  {\it traces to~$\alpha$ in~$\varphi$\/}. Each trace is a~set of formulas.

\begin{itemize}[noitemsep,nolistsep]
\item $\trejs(\alpha,\varphi)=\pusty$ if $\alpha \not\in \TG\varphi$.
\item $\trejs(\alpha,\alpha)=\{\pusty\}$.
\item $\trejs(\alpha,\tau\to\sigma)= \{\{\tau\}\cup T\ |\ 
T\in \trejs(\alpha,\sigma)\}$. 
\item $\trejs(\alpha,\tau\wedge\sigma)=
  \trejs(\alpha,\tau)\cup\trejs(\alpha,\sigma)$.
\end{itemize}

For example, $\trejs(p,r\to(p\wedge(q\to p)\wedge(s\to p\vee q))=
\{\{r\},\{r,q\}\}$. 



\begin{lemma}\label{trzyipolnazad}
  Let $(x\:\psi)\in\Gamma$
  and
  $T\in \trejs(\alpha,\psi)$. If $\Gamma\vdash\rho$, for all
  $\rho\in T$, then $\Gamma\vdash xE_1\ldots E_n:\alpha$,  where
  $n\geq 0$ and $E_1,\ldots, E_n$ are some terms or projections.
\end{lemma}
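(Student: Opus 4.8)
The plan is to prove a slightly stronger statement, in which the head variable $x$ is replaced by an arbitrary proof term, and then obtain the lemma as a special case. Concretely, I would show by structural induction on $\psi$ that for every term $M$ with $\Gamma\vdash M:\psi$ and every $T\in\trejs(\alpha,\psi)$ such that $\Gamma\vdash\rho$ for all $\rho\in T$, there exist eliminators $E_1,\ldots,E_n$, each a term or a projection, with $\Gamma\vdash ME_1\ldots E_n:\alpha$. The lemma follows by taking $M=x$: since $(x\:\psi)\in\Gamma$, rule $(var)$ gives $\Gamma\vdash x:\psi$, and the generalised claim then produces exactly a term of the form $xE_1\ldots E_n$ of type $\alpha$.

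For the base of the induction, first note that a trace can exist only when $\alpha\in\TG\psi$, since otherwise $\trejs(\alpha,\psi)=\pusty$ and there is nothing to prove. When $\psi=\alpha$ (an atom, or a disjunction coinciding with the target) the only trace is $\pusty$, so we take $n=0$ and the desired conclusion $\Gamma\vdash M:\alpha$ is just the hypothesis. For the inductive step with $\psi=\tau\to\sigma$, every $T\in\trejs(\alpha,\psi)$ has the shape $\{\tau\}\cup T'$ with $T'\in\trejs(\alpha,\sigma)$. From $\tau\in T$ we obtain a term $N$ with $\Gamma\vdash N:\tau$, and rule $(\to E)$ gives $\Gamma\vdash MN:\sigma$; since $T'\subseteq T$, every formula of $T'$ is provable in $\Gamma$, so the induction hypothesis applied to $MN$ and $T'$ yields $\Gamma\vdash MN E_2\ldots E_n:\alpha$, and prepending the term $N$ as the first eliminator finishes the case. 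For $\psi=\tau\wedge\sigma$ we have $\trejs(\alpha,\psi)=\trejs(\alpha,\tau)\cup\trejs(\alpha,\sigma)$, so $T$ comes from one conjunct; if $T\in\trejs(\alpha,\tau)$ then rule $(\land E1)$ gives $\Gamma\vdash M\pi_1:\tau$ and the induction hypothesis applied to $M\pi_1$ supplies the tail, with the projection $\pi_1$ prepended, and the case $T\in\trejs(\alpha,\sigma)$ is symmetric using $(\land E2)$ and $\pi_2$.

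The only genuine design decision, rather than an obstacle, is the generalisation from the head variable $x$ to an arbitrary term $M$: it is precisely what allows the induction to thread through the intermediate terms $MN$ and $M\pi_i$ produced at each elimination step, which are no longer variables. Everything else is routine, and two small points are worth recording. First, passing to set-theoretic traces causes no difficulty: a trace may absorb a repeated argument, but we only ever need the existence of proofs for the formulas actually listed in $T$, which the hypothesis guarantees. Second, the recursion defining $\trejs$ never descends into a disjunction, as disjunctions occur only as whole targets in $\TG{-}$; this is exactly why the eliminators produced are applications and projections only, matching the statement of the lemma.
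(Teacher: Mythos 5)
Your proof is correct and follows essentially the same route as the paper: induction on the structure of $\psi$, producing an application for the $\to$ case and a projection for the $\wedge$ case, with the base case $\trejs(\alpha,\alpha)=\{\pusty\}$ giving $n=0$. The only (cosmetic) difference is how the head is threaded through the induction: you strengthen the statement to an arbitrary head term $M$, whereas the paper keeps the head a variable and, e.g.\ in the conjunction case, applies the induction hypothesis to a fresh assumption $y\:\psi_1$ and then replaces $y$ by $x\pi_1$ --- both devices are standard and equally sound.
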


\begin{proof} Induction \wrt~$\psi$. For example,
  assume 
$\psi=\psi_1\wedge\psi_2$, and  let $T\in \trejs(\alpha,\psi_1)$.
Given that we apply the induction hypothesis to obtain
$\Gamma, y\:\psi_1\vdash yE_1\ldots E_n:\alpha$, where $n\geq 0$,
so 
$\Gamma\vdash x\pi_1E_1\ldots E_n:\alpha$.
\end{proof}

\begin{lemma}\label{trzyipoltam}
Assume that $(x\:\psi)\in\Gamma$ 
and $\Gamma\vdash xE_1\ldots E_m:\varphi$, 
where $E_1,\ldots, E_m$ are terms or projections and $\varphi$ is an
atom or a disjunction. Let $J=\{j\ |\ E_j\ \mbox{is a term}\}$
and let $\Gamma\vdash E_j:\sigma_j$, for all $j\in J$. 
Define $T=\{\sigma_j\ |\ j\in J\}$. 
Then $\varphi\in\TG\psi$, and 
$T\in\trejs(\varphi,\psi)$.
\end{lemma}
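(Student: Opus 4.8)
The plan is to prove the statement by induction, but to make the induction go through I would first generalize it: instead of requiring the head of the spine to be a variable $x$ with $(x\:\psi)\in\Gamma$, I would prove the same conclusion for an arbitrary term $M$ with $\Gamma\vdash M:\psi$ in place of $x$, the lemma being the special case $M=x$. The reason for this strengthening is that a single elimination step turns a variable head into a compound head such as $M\pi_1$ or $MN$, and the induction hypothesis must still apply to it. I would then run the induction on the structure of $\psi$ (equivalently, on the length $m$), peeling off the \emph{first} eliminator $E_1$.

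For the base case $m=0$ the spine is just $M$, so $\Gamma\vdash M:\psi$ and $\Gamma\vdash M:\varphi$; since the terms are Church-style and fully annotated, each term has at most one type in a given context, whence $\psi=\varphi$. Because $\varphi$ is an atom or a disjunction we read off $\TG\psi=\{\varphi\}$ and $\trejs(\varphi,\psi)=\{\pusty\}$ directly from the defining clauses, and since $J=\pusty$ and $T=\pusty$, both conclusions follow.

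For the inductive step ($m\geq 1$) I would invert the typing of $ME_1$ according to the shape of $E_1$. If $E_1=\pi_i$ is a projection, then $\psi$ must be a conjunction $\psi_1\wedge\psi_2$ and $M\pi_i:\psi_i$; applying the induction hypothesis to the shorter spine $(M\pi_i)E_2\ldots E_m$, whose head has the structurally smaller type $\psi_i$, gives $\varphi\in\TG{\psi_i}$ and $T\in\trejs(\varphi,\psi_i)$, with $T$ unchanged since a projection contributes nothing to $J$. The clauses $\TG{\psi_1\wedge\psi_2}=\TG{\psi_1}\cup\TG{\psi_2}$ and $\trejs(\varphi,\psi_1\wedge\psi_2)=\trejs(\varphi,\psi_1)\cup\trejs(\varphi,\psi_2)$ then lift both facts to $\psi$. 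If $E_1=N$ is a term, then $\psi$ must be an arrow $\psi'\to\rho$ with $\Gamma\vdash N:\psi'$ and $MN:\rho$, so by uniqueness of types $\sigma_1=\psi'$. Applying the induction hypothesis to $(MN)E_2\ldots E_m$ yields $\varphi\in\TG\rho$ and $T''\in\trejs(\varphi,\rho)$ for the set $T''$ of argument types among $E_2,\ldots,E_m$, and hence $T=\{\psi'\}\cup T''$. Now $\TG{\psi'\to\rho}=\TG\rho$ and $\trejs(\varphi,\psi'\to\rho)=\{\,\{\psi'\}\cup S\mid S\in\trejs(\varphi,\rho)\,\}$ give precisely $\varphi\in\TG\psi$ and $T\in\trejs(\varphi,\psi)$.

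The routine content is the inversion of the two typing rules and the matching against the recursive clauses of $\TG$ and $\trejs$; the one point that needs care is treating $T$ as a \emph{set}, so that repeated argument types collapse, consistently with the set-builder shape of the arrow clause of $\trejs$. The main (though modest) obstacle is recognising that the induction cannot keep the head a variable: peeling the \emph{last} eliminator would leave an intermediate spine whose type is a conjunction rather than an atom or a disjunction, so the hypothesis would not apply. Generalising the head to an arbitrary term of type $\psi$ and peeling the first eliminator is exactly what makes the argument close.
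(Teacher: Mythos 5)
Your argument is correct in substance and reaches the same conclusion by essentially the same induction (peel the first eliminator, match against the recursive clauses of $\TG{\cdot}$ and $\trejs$), but it handles the head differently from the paper. The paper keeps the head a \emph{variable} throughout: in the arrow case it replaces the partial application $xE_1$ by a fresh variable $y$ of the residual type $\psi_2$ and applies the induction hypothesis to $\Gamma, y\:\psi_2\vdash yE_2\ldots E_m:\varphi$ (induction on $m$); the conjunction case is handled analogously with $x\pi_i$ replaced by a fresh $y\:\psi_i$. So your closing claim that ``the induction cannot keep the head a variable'' is not quite right --- it can, via this fresh-variable substitution, which is the paper's route and is arguably cleaner because the statement being proved never changes. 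Your alternative, generalising the head to a term $M$ with $\Gamma\vdash M:\psi$, buys you the ability to reuse the literal spine $(M\pi_i)E_2\ldots E_m$ without renaming, at the cost of having to strengthen the statement. You are right that peeling the \emph{last} eliminator fails, since the intermediate type need not be an atom or disjunction; both proofs avoid this the same way.

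One caveat on your generalisation: the step ``each term has at most one type in a given context'' is not true for arbitrary terms of this calculus --- $\inn_1 N$ can be typed $\alpha\lor\beta$ for any $\beta$ --- so the strengthened statement is false for injection-headed $M$ (in the base case one could have $\psi\neq\varphi$). This does not break your proof, because the only heads that ever arise are variable-headed spines $xE'_1\ldots E'_k$, for which uniqueness of typing does hold; but you should either restrict the generalisation to such neutral terms or scope the uniqueness claim accordingly.
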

\begin{proof} Induction \wrt $m$. 
For example, if $\psi=\psi_1\to\psi_2$, then we apply 
the induction hypothesis to 
$\Gamma,y\:\psi_1\vdash yE_2\ldots E_m:\varphi$.
This implies $T'=\{\sigma_j\ |\ j\in J\wedge j>1\}\in 
\trejs(\varphi,\psi_2)$,
and consequently 
$T=T'\cup\{\psi_1\}\in 
\trejs(\varphi,\psi)$.
\end{proof}

For a given judgement $\Gamma\vdash \varphi$, the \emph{Wajsberg 
algorithm} (WA) 
attempts (implicitly) 
to construct a~long normal proof term. It proceeds as 
follows:



\begin{enumerate}[noitemsep,nolistsep]
\item\label{uwdiUDG} If $\varphi = \tau\wedge\sigma$, call $\Gamma\vdash\tau$ 
and  $\Gamma\vdash\sigma$.
\item\label{asudOFH}
 If $\varphi = \tau\to\sigma$, call $\Gamma,\tau\vdash\sigma$.
\item If $\varphi$ is
  an atom or a disjunction,
  choose $\psi\in\Gamma$ and
  $\alpha\in\TG\psi$ such that
        either $\alpha$ is a~disjunction, or $\alpha=\bot$, 
or $\alpha$ is a~propositional variable
and $\alpha=\varphi$.
Then choose $T\in \trejs(\alpha,\psi)$, and: 
      \begin{itemize}[noitemsep,nolistsep] 
      \item Call $\Gamma\vdash\rho$, for every $\rho\in T$;
      \item If $\alpha=\beta\vee\gamma$, call $\Gamma,\beta\vdash\varphi$
and $\Gamma,\gamma\vdash\varphi$ in addition.
\end{itemize}
\label{en:target}
\end{enumerate}
The procedure accepts in case~(\ref{en:target}), when $\varphi$ is an atom
in~$\Gamma$, as there is nothing to call.

With respect to \iipc case~(\ref{uwdiUDG}) 
disappears and case~(\ref{en:target}) simplifies to
\begin{itemize}
\item[\ref{en:target}'.] If $\varphi$ is an atom then choose
  $\rho_1\to\cdots\to\rho_n\to\varphi\in\Gamma$ and call
  $\Gamma\vdash\rho_i$, for all $i=1,\ldots,n$.
\end{itemize}
We thus obtain the algorithm for inhabitation 
in the simply typed lambda-calculus known 
as the {\it Ben-Yelles algorithm\/}~\cite{BenYelles79}. 




The most important properties  
of WA
are the following.

\begin{lemma}\label{dudDFD}~ 
\begin{enumerate}\item\label{dudDFDO1}
The algorithm WA accepts an \ipc judgement 
\iff it is provable.
\item \label{dudDFDO2} 
All formulas occurring in any run of the algorithm are subformulas
of the formulas occurring in the initial judgement.
  \end{enumerate}
\end{lemma}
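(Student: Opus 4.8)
The plan is to treat the two clauses separately. Clause~(\ref{dudDFDO2}) follows by inspection of the three algorithm steps once we know that targets and traces involve only subformulas, while clause~(\ref{dudDFDO1}) is obtained by proving soundness and completeness of WA, with Lemma~\ref{trzypotrzy} and the two trace Lemmas~\ref{trzyipolnazad} and~\ref{trzyipoltam} serving as the bridge between proof terms and the algorithm.

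For clause~(\ref{dudDFDO2}) I would first record, each by an easy induction on $\psi$, two auxiliary facts: every $\alpha\in\TG\psi$ is a subformula of $\psi$, and every formula occurring in some $T\in\trejs(\alpha,\psi)$ is a subformula of $\psi$. Granting these, the subformula invariant is immediate from the three cases. Case~(\ref{uwdiUDG}) replaces the goal $\tau\wedge\sigma$ by its immediate subformulas and leaves $\Gamma$ untouched; case~(\ref{asudOFH}) replaces $\tau\to\sigma$ by $\sigma$ and adds the subformula $\tau$ to the context; and in case~(\ref{en:target}) the chosen $\psi$ already belongs to $\Gamma$, so the members of the trace $T$ and the disjuncts $\beta,\gamma$ of $\alpha=\beta\vee\gamma$ are, by the two observations, subformulas of that $\psi$. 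A routine induction on the length of a run then propagates the invariant to every judgement that appears.

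For soundness of clause~(\ref{dudDFDO1}) (acceptance implies provability) I would induct on the finite accepting computation tree and build a derivation according to Figure~\ref{fig:rules}. Cases~(\ref{uwdiUDG}) and~(\ref{asudOFH}) correspond verbatim to $(\land I)$ and $(\to I)$. In case~(\ref{en:target}) we have $(x\:\psi)\in\Gamma$, a target $\alpha\in\TG\psi$, and a trace $T\in\trejs(\alpha,\psi)$ all of whose members are provable by the induction hypothesis; Lemma~\ref{trzyipolnazad} then delivers a head term $xE_1\ldots E_n$ of type $\alpha$. If $\alpha$ is the atomic goal $\varphi$ we are done (the no-call acceptance being the instance $\psi=\varphi$, $T=\pusty$, closed by $(var)$); if $\alpha=\bot$ we finish with $(\bot E)$; and if $\alpha=\beta\vee\gamma$ the two extra accepting subgoals $\Gamma,\beta\vdash\varphi$ and $\Gamma,\gamma\vdash\varphi$ combine with this head term through $(\lor E)$ to yield $\Gamma\vdash\varphi$.

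For completeness (provability implies acceptance) I would invoke Lemma~\ref{trzypotrzy} to fix a long normal inhabitant $M$ of $\Gamma\vdash\varphi$ and induct on $M$. A top-level constructor of $M$ matches an introduction step of WA: $\langle N_1,N_2\rangle$ fires case~(\ref{uwdiUDG}), $\lambda x.\,N$ fires case~(\ref{asudOFH}), and an injection $\inn_i N$ matches introduction of the disjunctive goal, the immediate subterms furnishing the accepting runs for the recursive calls by the induction hypothesis. If instead $M$ is a head spine $xE_1\ldots E_m$, possibly closed by a $\vee$- or $\bot$-eliminator, with $x\:\psi\in\Gamma$ and $\varphi$ an atom or a disjunction, then Lemma~\ref{trzyipoltam} reads off the very target $\alpha\in\TG\psi$ and trace $T\in\trejs(\alpha,\psi)$ realised by the spine; the algorithm may choose exactly this $\psi$, $\alpha$ and $T$, the argument subterms give accepting runs for the calls $\Gamma\vdash\rho$, and the two branches of a $\vee$-eliminator supply the runs for $\Gamma,\beta\vdash\varphi$ and $\Gamma,\gamma\vdash\varphi$. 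I expect case~(\ref{en:target}) to be the main obstacle: one must extract the correct existential choices of $\psi$, $\alpha$ and $T$ from the shape of the long normal spine and verify that the trace machinery of Lemmas~\ref{trzyipolnazad} and~\ref{trzyipoltam} matches the head-variable elimination chain exactly, including the careful handling of a trailing $\vee$- or $\bot$-eliminator.
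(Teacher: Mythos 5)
Your proposal follows essentially the same route as the paper: clause~(\ref{dudDFDO2}) by inspection of the algorithm's cases (the paper leaves the two auxiliary subformula facts about $\TG{\psi}$ and $\trejs(\alpha,\psi)$ implicit), and clause~(\ref{dudDFDO1}) by the same pair of inductions --- on the accepting run via Lemma~\ref{trzyipolnazad} for soundness, and on a long normal inhabitant supplied by Lemma~\ref{trzypotrzy} via Lemma~\ref{trzyipoltam} for completeness. The only divergence is your appeal to a $\vee$-introduction step (a constructor $\inn_i N$ for a disjunctive goal), which the paper's statement of WA does not literally list among its cases; since long normal forms of disjunctive type can indeed be injections, this is a gap in the algorithm's presentation rather than in your argument.
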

\begin{proof} (\ref{dudDFDO1}) We prove that  a judgement
$\Gamma\vdash\varphi$ is accepted 
if and only if there exists a~long normal form 
of type~$\varphi$ in~$\Gamma$. From left to right we proceed by
induction \wrt the definition of the algorithm,
using Lemma~\ref{trzyipolnazad}. In cases~(\ref{uwdiUDG}) 
and~(\ref{asudOFH}) the term $M$ is a~constructor, 
in case~(\ref{en:target}) it is an eliminator with
a head variable $x$ of type~$\psi$. For example, 
 if $\varphi=\tau\vee\sigma$
and $\psi=\gamma_1\to\gamma_2\to\alpha\vee\beta$ then
$M=xN_1N_2[z\:\alpha.\,P;\ v\:\beta.\,Q]$, where
$N_1, N_2, P, Q$ are long normal forms obtained in the
four recursive (or parallel) calls.

From right to left we work by induction \wrt the size of 
the lnf using Lemma~\ref{trzyipoltam}. 
For example, in the case of the term 
$xE_1\ldots E_m[u\:\alpha.\, P;\ v\:\beta.\, Q]$, 
types of $E_1,\ldots,E_m$ make a trace $T$ 
to $\alpha\vee\beta$ in $\psi$, and we can use induction
for $\Gamma,u\:\alpha\vdash P:\varphi$ and
$\Gamma,v\:\alpha\vdash Q:\varphi$.

(\ref{dudDFDO2})
In each of the steps of WA
  each new formula must
  be a subformula
  of either the present proof goal or one of the
  assumptions.
\end{proof}


\subsubsection*{Monotonic automata}

We define here a natural notion of automaton used as 
operational semantics of \ipc. 
This notion is a simplification
of the automata introduced by Barendregt, Dekkers 
and Schubert \cite{aleksybar} 
and of those used in~\cite{trudne} (but 
differs significantly from the notion introduced by 
Tzevelekos~\cite{Tzevelekos11}).

The idea is simple. If we read a proof task $\Gamma\vdash\varphi$
as a configuration of a machine, then any action taken by WA 
results in expanding the memory $\Gamma$ and proceeding
to a new internal state, yielding a new task (or a new 
configuration) $\Gamma'\vdash\varphi'$. For example, if 
an assumption of the form $(p\to q)\to r\in \Gamma$ is used to 
derive $\Gamma\vdash r$, then the next task $\Gamma,p\vdash q$
is a result of executing an instruction that can be 
written as
\relax{$r:\czek{(p\to q)\to r}{p}{q}$} (``in state~$r$
\relax{check the presence of $(p\to q)\to r$ in memory,}
add $p$ to the storage and go to state~$q$'').

A {\it monotonic automaton\/} is therefore 
defined as $\M = \<Q,R,f,\I\>$, where 

\begin{itemize}[noitemsep,nolistsep]
\item $Q$ is a finite set of states, with $f\in Q$ as the final state.
\item $R$ is a finite set of registers;
\item $\I$ is a finite set of instructions 
 of the form:
\begin{itemize}[noitemsep,nolistsep]
\item[(1)] 
$q: \czek{S_1}{S_2}{p}$, or 
\item [(2)] 
$q: \jmp{p_1}{p_2}$, 
\end{itemize}
where $q,$ $p,p_1,p_2\in Q$ and $S_1,S_2\subseteq R$.
\end{itemize}
We define a~{\it configuration\/} of~$\M$ as
a~pair $\<q,S\>$, where $q\in Q$ and $S\subseteq R$. 
Let $I\in \I$. 
The transition relation
$\<q,S\>\to_I \<p,S'\>$ holds 
\begin{itemize}[noitemsep,nolistsep]
\item for $I$ of type (1), when $S_1\subseteq S$, $S'=S\cup S_2$; 
\item for $I$ of type (2), when $S=S'$, and $p=p_1$ or $p=p_2$.
\end{itemize}
A~configuration $\<q,S\>$ is {\it accepting\/}
 when either $q=f$, or
\begin{itemize}[noitemsep,nolistsep]
\item $\<q,S\>\to_I \<p,S'\>$, where $I$ is of type (1), and 
$\<p,S'\>$ is accepting, or
\item  $\<q,S\>\to_I \<p_1,S\>$ and $\<q,S\>\to_I \<p_2,S\>$, 
where $I$ is of type (2), and
both $\<p_1,S\>$ and $\<p_2,S\>$ are \mbox{accepting}.
\end{itemize}

Observe that a monotonic automaton is an alternating machine.
Instructions of type (2) introduce
\relax{\emph{universal branching}},
and 
\relax{\emph{nondeterminism}}
occurs when more than one instruction is 
applicable in a~state.\footnote{But states themselves are not classified 
  as existential or universal.} %
The name ``monotonic'' is justified by 
the memory usage: registers are write-once devices, once raised (set to 1)
they cannot be reset to zero. Note also that all tests are positive: the
machine cannot see that a register is off.
A {\it nondeterministic\/} automaton
is one without universal branching (cf.~Section~\ref{sec:three-minus}).

It should be clear that our definition is motivated by proof search.
Indeed, the \relax{algorithm} WA is almost immediately 
implemented as an automaton.

\begin{proposition}\label{wddoDH32}
  Given a formula  $\Phi$ in \ipc, one can 
 construct (in logspace)
 a monotonic automaton $\M_\Phi$ and state~$q$ 
so that $\,\vdash\Phi$ \iff the configuration
  $\<q,\pusty\>$ of $\M_\Phi$ is~accepting.
\end{proposition}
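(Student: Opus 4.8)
The plan is to realize the Wajsberg algorithm directly as a monotonic automaton, reading a proof task $\Gamma\vdash\varphi$ as a configuration $\<q_\varphi,S\>$ in which the state $q_\varphi$ records the current goal $\varphi$ and the register set $S$ records exactly which subformulas of $\Phi$ currently sit in $\Gamma$. Accordingly I take the registers $R$ to be the set of subformulas of $\Phi$ (which, by Lemma~\ref{dudDFD}(\ref{dudDFDO2}), contains every assumption that can ever arise), the states $Q$ to consist of one state $q_\varphi$ for each subformula $\varphi$ of $\Phi$ together with a few auxiliary states introduced below, and a single final state $f$. The starting state is $q=q_\Phi$, and the whole proposition follows from the invariant
$$
\<q_\varphi, S\>\ \mbox{is accepting}\quad\Longleftrightarrow\quad \Gamma_S\vdash\varphi,
$$
where $\Gamma_S=\{\psi\mid \psi\in S\}$, by instantiating it at $\varphi=\Phi$ and $S=\pusty$ (so $\Gamma_\pusty=\pusty$).

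The transitions follow the three cases of WA. If $\varphi=\tau\wedge\sigma$, I give $q_\varphi$ the single instruction $q_\varphi:\jmp{q_\tau}{q_\sigma}$; the universal branching of a type~(2) instruction matches conjunction introduction, and since such instructions leave the memory untouched both conjuncts are attacked with the same context. If $\varphi=\tau\to\sigma$, I use the type~(1) instruction $q_\varphi:\czek{\pusty}{\{\tau\}}{q_\sigma}$, which performs no test, inserts the new assumption $\tau$ and moves to $\sigma$, exactly as implication introduction does. The elimination step is the crux. When $\varphi$ is a target (an atom or a disjunction) I install, for every assumption $\psi$, every admissible $\alpha\in\TG\psi$ (i.e.\ $\alpha$ a disjunction, or $\bot$, or the propositional variable $\varphi$ itself), and every trace $T\in\trejs(\alpha,\psi)$, a type~(1) instruction $q_\varphi:\czek{\{\psi\}}{\pusty}{r^{0}_{\psi,\alpha,T}}$. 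This tests that $\psi$ is present and passes, via the existential choice offered by type~(1) instructions, to an auxiliary state $r^{0}_{\psi,\alpha,T}$ initiating the conjunctive obligations of this choice. From $r^{0}$ a chain of type~(2) instructions branches universally to a leaf $q_\rho$ for each $\rho\in T$ and, when $\alpha=\beta\vee\gamma$, to two further leaves that first add $\beta$ (resp.\ $\gamma$) by a one-step instruction $\czek{\pusty}{\{\beta\}}{q_\varphi}$ and then return to $q_\varphi$, mirroring the $\vee$-elimination calls $\Gamma,\beta\vdash\varphi$ and $\Gamma,\gamma\vdash\varphi$; trivial instructions absorb the degenerate chains. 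If there are no obligations at all (that is, $T=\pusty$ and $\alpha$ is an atom or $\bot$) the instruction jumps straight to $f$, so that an available atomic assumption, or an available $\bot$, is accepted at once.

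Correctness is the displayed invariant, proved in both directions; both the context and the memory only ever grow, so the two forms of monotonicity agree. From provability to acceptance one builds, from a long normal proof of $\Gamma_S\vdash\varphi$ (which exists by Lemma~\ref{trzypotrzy}) and guided by Lemma~\ref{trzyipoltam}, a finite accepting computation tree, reading conjunction/implication/elimination nodes off the term: the trace read from the head variable's spine is precisely some $T\in\trejs(\alpha,\psi)$, which selects the matching instruction. Conversely, a finite accepting tree is turned, using Lemma~\ref{trzyipolnazad} at the elimination nodes, into a successful run of WA, whence $\Gamma_S\vdash\varphi$ by Lemma~\ref{dudDFD}(\ref{dudDFDO1}); the induction is well-founded because acceptance is the least relation closed under its clauses, so every accepting configuration has a finite witness. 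For the complexity bound, the subformulas, the sets $\TG\psi$, and the traces $\trejs(\alpha,\psi)$ are all addressable by positions inside $\Phi$ — a trace being determined by an occurrence of $\alpha$ as a target of $\psi$ and consisting of the arguments collected along the path to it — so there are only polynomially many instructions, each emitted while scanning $\Phi$ with a constant number of pointers, which is the required logspace construction.

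The main obstacle is the faithful encoding of case~(\ref{en:target}): it is the only place combining a nondeterministic choice (of $\psi,\alpha,T$), a genuine multiway conjunctive branching (over $T$ and the two disjunction subgoals), and memory updates that must occur on some branches but not on others. Interleaving the type~(1) and type~(2) instructions correctly — so that every leaf is reached with exactly the context prescribed by WA — while simultaneously keeping the number of traces polynomial is where the care lies; the conjunction and implication cases are then routine.
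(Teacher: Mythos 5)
Your construction is essentially the paper's own: registers and states are the subformulas of $\Phi$, conjunction and implication become a \texttt{jmp} and a \texttt{check/set} instruction respectively, the elimination case is a check on $\psi$ followed by a universal chain through auxiliary states covering the trace obligations and the two disjunction returns, and correctness is the same invariant ($\<q_\varphi,S\>$ accepting iff $\Gamma_S\vdash\varphi$) proved by the same appeals to Lemmas~\ref{trzyipolnazad}, \ref{trzyipoltam} and~\ref{dudDFD}. You supply somewhat more detail than the paper's sketch (explicit indexing of auxiliary states, degenerate chains, and the positional addressing of traces for the logspace bound), but the route is the same.
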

\begin{proof} \relax{(Sketch)}
Let $S$ be the set of all subformulas of~$\Phi$.
Define \mbox{$\M = \<Q,R,f,\I\>$}, where

\begin{itemize}[noitemsep,nolistsep]
\item $R=S$ is the set of registers. 
\item The set of states $Q$ contains $S$ and some auxiliary states.
\end{itemize}

Under this definition, a \relax{judgement} $\Gamma\vdash\varphi$ 
corresponds directly to 
a~configuration $\<\varphi,\Gamma\>$ of $\M$. The instructions 
of the automaton now implement cases (\ref{uwdiUDG}--\ref{en:target}) 
of WA. Of course the following instructions are in~$\I$:

1. $\varphi: \jmp{\tau}{\sigma}$, ~~~~~~~~~~~~~~~~~ \ciut
for each $\varphi=\tau\wedge\sigma\in S$;

2. $\varphi: \czek{\pusty}{\tau}{\sigma}$, ~~~\, for each 
$\varphi=\tau\to\sigma\in S$;

Case (\ref{en:target}) of WA splits into three subcases handled with
help of auxiliary states, and depending on a choice of a formula
$\psi\in S$.

If $\varphi$
is an atom, $\varphi\in \TG\psi$, for some $\psi\in S$, and
$\{\rho_1,\ldots,\rho_m\}\in\trejs(\varphi,\psi)$, then $\I$ 
contains a~sequence of instructions (using $m-2$ brand new states)
abbreviated as:

3a. $\varphi: \czek\psi\pusty{\rho_1,\ldots,\rho_m}$\,.

If $\varphi$ is an atom or a disjunction, and $\bot\in\TG\psi$, 
for some $\psi\in S$, and 
$\{\rho_1,\ldots,\rho_m\}\in\trejs(\bot,\psi)$, 
then $\I$ also contains similar instructions:

3b. $\varphi: \czek\psi\pusty{\rho_1,\ldots,\rho_m}$\,.

If $\varphi$ is an atom or a disjunction, 
$\alpha\vee\beta
\in\TG\psi$, 
for some $\psi\in S$, and 
$\{\rho_1,\ldots,\rho_m\}\in\trejs(\alpha\vee\beta,\psi)$, then 
$\I$ contains instructions (using $m$ auxiliary states 
including $s_1$ and $s_2$):

3c. $\varphi: \czek\psi\pusty{\rho_1,\ldots,\rho_m,s_1,s_2}$;

~~~~~$s_1:\czek\pusty{\alpha}\varphi$;

~~~~~$s_2:\czek\pusty{\beta}\varphi$.

For example, 
if $\psi=\alpha\to\beta\vee\gamma\in\Gamma$,
and $\varphi\in S$ is 
an atom, then  
the following
instructions are in~$\I$ (where $p_1, p_2, p_3,p_4$
are fresh auxiliary states):
\begin{displaymath}
  \begin{array}{r@{\;:\;}l}
    \varphi
    & \czek{\psi}{\pusty}{p_1}\,;\\[1ex]
    p_1
    & \jmp{\alpha}{p_2}\,;\\[1ex]
    p_2
    & \jmp{p_3}{p_4}\,;\\[1ex]
    \,p_3
    &\czek{\pusty}{\beta}\varphi\,;\\[1ex]
    \,p_4
    &\czek{\pusty}{\gamma}\varphi\,.\\[1ex]
  \end{array}
\end{displaymath}
By straightforward induction one proves that a configuration
of the form $\<\varphi,\Gamma\>$ is accepting \iff $\Gamma\vdash
M:\varphi$ for some lnf $M$.
It remains to define $q$ as $\Phi$, and observe that
by~Lemma~\ref{dudDFD}(\ref{dudDFDO2}) the automaton can be 
computed in logspace. 
\end{proof}

\subsubsection*{Complexity}

The {\it halting problem\/} for monotonic automata is

\hfil ``{\it Given $\M,q,S$, 
is $\<q,S\>$ an accepting configuration of~$\M$?}\ciut''

In the remainder of this section we show that this problem 
is \relax{\pspace-complete.}
The upper bound is routine.


\begin{lemma}\label{uusuD4ERR}
It is decidable in polynomial space 
if a given configuration of  a monotone automaton is accepting. 
For nondeterministic automata (with no universal branching)
the  problem is in \np.
\end{lemma}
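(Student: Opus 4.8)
The plan is to read acceptance of a monotonic automaton $\M$ as an alternating reachability problem and to exploit the write-once (monotone) nature of the registers to bound the \emph{depth} of accepting computations polynomially. The crucial combinatorial observation is that along any branch of an accepting computation the register contents $S$ never decrease, so the memory can strictly grow at most $|R|$ times, while between two consecutive growth steps the state component merely ranges over the finite set $Q$. Consequently, if one takes an accepting computation of minimal size, no configuration $\<q,S\>$ repeats on a single branch: otherwise the subtree hanging at the earlier occurrence could be replaced by the strictly smaller subtree hanging at the later one (both witness acceptance of the same configuration), contradicting minimality. Since the memory values along a branch form a chain of subsets of $R$, i.e.\ at most $|R|+1$ distinct values, and each such value is shared by at most $|Q|$ configurations, every branch has length at most $D=|Q|\cdot(|R|+1)$, which is polynomial in the size of~$\M$.

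First I would make this precise by formalising the notion of an \emph{accepting computation tree} (final configurations at the leaves, a single child for a chosen type-(1) instruction, two children for a chosen type-(2) instruction) and proving, by the cut-and-paste argument above, that a configuration is accepting \iff it admits an accepting computation tree of depth at most $D$. Given this bound, the \pspace upper bound follows by turning the recursive acceptance definition into an alternating procedure of recursion depth at most $D$: existentially choose an applicable instruction (this is exactly the nondeterminism among several applicable instructions), and for a type-(2) instruction universally branch into its two successor states, decrementing a depth counter capped at $D$. Each configuration has polynomial size and the recursion depth is $D$, so the procedure runs in alternating polynomial time; by the equality of alternating polynomial time and \pspace this places the problem in \pspace. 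Equivalently, a deterministic recursion that explores the existential choices sequentially while reusing space keeps a stack of depth $D$ with polynomial-size frames, again yielding polynomial space.

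For the \np bound I would observe that in the absence of universal branching (no type-(2) instructions) an accepting computation tree degenerates into a single branch, i.e.\ a path of type-(1) transitions from the given configuration to a configuration with state $f$. By the same bound such a path may be taken of length at most $D$, so a nondeterministic machine simply guesses the sequence of at most $D$ configurations and verifies each transition in polynomial time. The main obstacle is the depth-bounding lemma itself: making the cut-and-paste / cycle-elimination argument rigorous with respect to the least-fixed-point semantics of acceptance, so that pruning a repeated configuration provably preserves acceptance, and checking that the count of distinct configurations per branch is correct even though the state component is free to wander over all of $Q$ between memory increases. Once the bound $D$ is in place, the alternating algorithm, the space accounting, and the reachability argument for \np are all routine.
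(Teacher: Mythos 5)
Your proposal is correct and follows essentially the same route as the paper: both exploit the monotone growth of the register set along a branch to bound the depth of a (minimal) accepting computation tree by a polynomial in $|Q|$ and $|R|$, then conclude via an alternating polynomial-time procedure and $\mbox{\sc APtime}\subseteq\pspace$, with the nondeterministic case degenerating to a guess-and-verify path for \np. The only difference is one of presentation: you spell out the cut-and-paste pruning argument that the paper leaves implicit in the remark that a configuration must repeat within $n^2$ steps.
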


\begin{proof} 
An accepting computation of an alternating automaton can be seen as 
a tree. Every branch of the tree is a (finite or infinite)
sequence 
$\<q_0,S_0\>, \<q_1,S_1\>, \<q_2,S_2\>,\dots$ of configurations, 
where $S_0\subseteq
S_1\subseteq S_2\subseteq\cdots$. If the number of states and the 
number of registers are bounded by~$n$ then a configuration must
necessarily be repeated after at most $n^2$ steps. An alternating
Turing Machine working in time $n^3$ can therefore verify if 
a given configuration is accepting, and our halting problem 
is in \relax{$\mbox{\sc APtime}\subseteq \pspace$,}
cf.~\cite[Ch.~19]{Papadimitriou95}. In case of no universal branching,
a~nondeterministic Turing Machine suffices. 
\end{proof}

The next example hints on the technique used to show the 
lower bound. 

\begin{example}\label{edssheah}\rm
Consider a finite automaton $\A$, with states $\{0,\ldots,k\}$, 
the initial state~0,
and the final state~$k$. Given a string $w=a_1\ldots a_n$, we define 
a~monotonic~$\M$ such that $\A$ accepts~$w$ \iff 
the initial configuration $\<q^0,r^{0}_0\>$ of~$\M$ 
is accepting.\footnote{We write 
sets without $\{~\}$ whenever it is convenient.}

States of $\M$ are $q^0,q^1,\ldots ,q^n,f$ where $q^0$ is initial and $f$ is final. 
Registers are $r^t_i$, for $t\leq n$ and $i\leq k$.
For all $t = 0,\ldots ,n-1$, we have an instruction

\hfil $q^t: \czek{r^{t}_i}{r^{t+1}_j}{q^{t+1}}$,

whenever 
$\A$, reading $a_{t+1}$ in state $i$, can enter state $j$.
For $t=n$, we take
at last 

\hfil $q^n :\czek{r^n_k}{\pusty}{f}$.

Then an accepting computation of
the automaton 
$\A$, consisting of states
$0, i_1, i_2,\ldots,i_n=k$, is represented by a computation
of $\M$, ending in $\<f,r^0_0,r^1_{i_1},\ldots,r^{n}_{i_n}\>$.
Note that the instructions of $\M$ 
are all of type (1), i.e., there is no alternation.

{\it Correctness\/}: By induction \wrt $n-t$ one shows 
that a~configuration of the form
$\<q^t,r^0_0,\ldots,r^t_{i_t}\>$ is accepting \iff $\A$ accepts
the suffix $a_{t+1}\ldots a_n$ of $w$ from state~$i_t$.
\end{example}


In order to simplify the proof of \pspace-hardness,
let us
begin with the following observation. Every language 
\relax{$L\in \pspace$}
reduces in logarithmic space to some 
context-sensitive language~$L'$, recognizable by a linear bounded 
automaton (LBA), cf.~\cite[Ch.~9.3]{przasniczka}.
Indeed, for any  \mbox{$L\in\mbox{\sc Dspace}(n^k)$}, 
take  $L'=\{w\$^{n^k}\ |\ w\in L\wedge |w|=n\}$, where $|w|$ denotes
the length of the word~$w$. Hence it suffices to reduce the  
halting problem for LBA (aka {\sc In-place Acceptance} problem,
cf.~\cite[Ch.~19]{Papadimitriou95}) 
to the halting problem of monotonic automata. This retains 
the essence of \pspace
but reduces the amount of bookkeeping. 

Given a linear bounded automaton $\A$ and an input string
$w=x^1\ldots x^n$,
%
we construct a~monotonic automaton $\M$ and an initial configuration $C_0$
such that 

\hfil $\A$ accepts $w$ ~~~\iff~~~$C_0$ is an accepting configuration of~$\M$.

Let $p$ be a polynomial such that $\A$ works in time $2^{p(n)}$.
The alternating automaton $\M$ simulates $\A$ by splitting the $2^{p(n)}$ 
steps of computation recursively into halves and executing the 
obtained fractions concurrently. The ``history'' of each branch 
of the computation tree of $\M$ is recorded in its registers.
For every
\relax{$d=0,\ldots, p(n)$,}
there are three groups of registers
(marked $B,E,H$) representing $\A$'s configurations 
at the beginning ($B$) and at the end ($E$) of a~computation 
segment of up to $2^d$ steps, and halfway ($H$) through that segment.
\relax
That is, for any \relax{$i=1,\ldots, n$, $d=0,\ldots, p(n)$,}
for any state $q$ of $\A$, 
and for any tape symbol $a$ of $\A$, the automaton $\M$ has the 
following registers:


\newcommand{\regS}[3]{{s(#1,\, #2,\, #3)}\xspace}
\newcommand{\regC}[4]{{c(#1,\, #2,\, #3,\, #4)}\xspace}
\newcommand{\regH}[3]{{h(#1,\, #2,\, #3)}\xspace}

\begin{tabular}{ll}
  $\bullet$~~ & $\regS{B}{d}{q}$, $\regS{H}{d}{q}$, $\regS{E}{d}{q}$
   \quad --~~
    ``the current state of $\A$ is $q$'';\\[1ex]

  $\bullet$~~ &  $\regC{B}{d}{i}{a}$, $\regC{H}{d}{i}{a}$, $\regC{E}{d}{i}{a}$
    \quad --~~
    ``the symbol at position $i$ is $a$'';\\[1ex]

  $\bullet$~~ & $\regH{B}{d}{i}$, $\regH{H}{d}{i}$, $\regH{E}{d}{i}$
  \quad --~~
    ``the machine head scans position~$i$''.\\[1ex]

\end{tabular}

By $B_d$, $H_d$, $E_d$ we denote the sets of all registers 
indexed by $d$ and, respectively, by $B,H,E$. A set of registers 
$S\subseteq X_d$ is an 
$X,d$-{\it code\/} of a configuration
of $\A$, when $S$ contains exactly one register
of the form
\relax{$\regS{X}{d}{q}$}, exactly one $\regH{X}{d}{j}$, 
and, for every~$i$,  exactly one 
\relax{$\regC{X}{d}{i}{a}.$}

The initial configuration of $\M$ is $C_0=\<0,S_0\>$, where 
$S_0$ is the $B,p(n)$-code of the initial configuration of~$\A$, that
is,\\[0.5ex]
\begin{tabular}{@{}l@{\;\;}l}
  -- $ S=\{$
  &
    $\regS{B}{p(n)}{q_0},\;\;
    \regC{B}{p(n)}{1}{x^1}, \ldots, \regC{B}{p(n)}{n}{x^n},\;\;
    \regH{B}{p(n)}{1}\;\}$.

\end{tabular}

The machine $\M$ works as follows:

In the initial phase (commencing in state 0)
it guesses the final configuration of $\A$, and sets
the appropriate registers in $E_{p(n)}$ to obtain the $E,p(n)$-code of
that final configuration. Then $\M$ enters state
$Q_{p(n)}$. 
%

Assume now that the machine is in configuration $\<Q_d,S\>$, where
$d>0$, and $S$ contains:

 --  a $B,d$-code of some configuration~$C^b$ of~$\A$;

-- an $E,d$-code of some configuration~$C^e$ of~$\A$.




The following steps are now executed:

(1)  An intermediate configuration $C^h$ is guessed, i.e., registers in 
$H_d$ are nondeterministically set to obtain an $H,d$-code of $C^h$.
The machine selects an adequate sequence of instructions from the 
set below (where $q'$, $a$, and $j$ are arbitrary):
\begin{displaymath}
  \begin{array}{r@{\;:\;}ll}
    Q_d
    & \czek{\pusty}{\regS{H}{d}{q'}}{Q_d^{1}}; & \\[0.5ex]
    Q_d^{i}
    & \czek{\pusty}{\regC{H}{d}{i}{a}}{Q_d^{i+1}}, & \mbox{ for } i=1,\ldots, n; \\[0.5ex] 
    Q_d^{n+1}
    & \czek{\pusty}{\regH{H}{d}{j}}{Q'_d}. & \\
  \end{array}
\end{displaymath}

(2) The machine makes a universal split into states $Q^B_d$ and $Q^E_d$.

(3) In state $Q^B_d$ registers in $S\cap B_d$ are copied to
corresponding registers in $B_{d-1}$. This has to be done
nondeterministically, by guessing which registers in $S\cap B_d$
are set. The relevant instructions are: 
\begin{displaymath}
  \begin{array}{@{}r@{\;\;}l@{}}
    Q^B_d\;:
    & \czek{\regS{B}{d}{q}}{\regS{B}{d-1}{q}}{Q_d^{B,1}}; \\[0.5ex]
    Q_d^{B,i}\;:
    & \czek{\regC{B}{d}{i}{a}}{\regC{B}{d-1}{i}{a}}{Q_d^{B,i+1}},\quad\;\; 
      \mbox{
      for } i=1,\ldots, n; \\[0.5ex] 
    Q_d^{B,n+1}\;:
    & \czek{\regH{B}{d}{j}}{\regH{B}{d-1}{j}}{Q^{BE}_{d}}. \\
  \end{array}
\end{displaymath}

Then  registers in $S\cap H_d$ are copied to 
$E_{d-1}$ in a similar way. In short, this can be informally written as
$B_{d-1}:= B_d; E_{d-1}:=H_d$. Then the machine enters state~$Q_{d-1}$.

(4) In state $Q^E_d$, the operations follow  a~similar scheme, 
that can be written in short as
\begin{displaymath}
  B_{d-1}:= H_d; E_{d-1}:=E_d; ~{\sf jmp}~Q_{d-1}.  
\end{displaymath}

The above iteration splits the computation of~$\M$ into $2^{p(n)}$ 
branches, each eventually entering state~$Q_0$. At this point 
we verify the correctness. The sets 
$S\cap B_0$ and $S\cap E_0$ should now encode some configurations 
$C^b$ and $C^e$ of $\A$ 
such that either $C^b=C^e$, or $C^e$ is obtained from~$C^b$ in one
step. This can be nondeterministically verified,  and 
afterwards $\M$ enters its final state.

This last phase uses, in case $C^b=C^e$, the supply of instructions
below (the other variant can be handled similarly).
\begin{displaymath}
  \begin{array}{r@{\;:\;}ll}
Q_0
&  \czek{\regS{B}{d}{q}}\pusty{Q^s_d(q)}; & \\[0.5ex]
Q^s_d(q)
&  \czek{\regS{E}{d}{q}}\pusty{Q^{c,1}_d};& \\[0.5ex]
Q^{c,i}_d
& \czek{\regC{B}{d}{i}{a}}\pusty{Q^{c,i}_d(a)};& \\[0.5ex]
Q^{c,i+1}_d(a)
& \czek{\regC{E}{d}{i}{a}}\pusty{Q^{c,i+1}_d};& \\[0.5ex]
Q^{c,n+1}_d
& \czek{\regH{B}{d}{j}}\pusty{Q^h_d(j)};& \\[0.5ex]
Q^h_d(j)
& \czek{\regH{E}{d}{j}}\pusty{f}.& \\[0.5ex]
  \end{array}
\end{displaymath}

The main property of the above construction is the following.

\begin{lemma}\label{ueuEFUOE77}
Let $S$ be a set of registers such that:

--  $S\cap B_d$ is a $B,d$-code of some configuration~$C^b$ of~$\A$;

-- $S\cap E_d$ is an $E,d$-code of some configuration~$C^e$ of~$\A$.

In addition, assume that $S\cap H_d=\pusty$,
as well as $S\cap (B_e \cup H_e\cup E_e)=\pusty$, for all $e< d$. 
Then the following are equivalent:
\begin{enumerate}\item 
$\<Q_d,S\>$ is an accepting configuration of~$\M$;
\item $C^e$ is reachable from $C^b$ in at most $2^d$ steps of~$\A$.
\end{enumerate}
\end{lemma}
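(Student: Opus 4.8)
The plan is to prove the equivalence by induction on $d$, mirroring the Savitch-style ``reachability halving'' that $\M$ implements. For the base case $d=0$, the set $S$ carries only a $B,0$-code of $C^b$ and an $E,0$-code of $C^e$, with $H_0$ empty. In state $Q_0$ the machine runs exactly the final verification instructions, comparing the two codes register by register (first the state, then each tape cell, then the head position) and reaching $f$ precisely when $C^b=C^e$, or, in the companion batch, when $C^e$ is a one-step successor of $C^b$. Because a code has exactly one state register, one head register, and one symbol register per position, these positive tests succeed iff the encoded configurations genuinely coincide (resp.\ are related by a single move of $\A$). Hence $\langle Q_0,S\rangle$ is accepting iff $C^e$ is reachable from $C^b$ in at most $2^0=1$ step.

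For the inductive step $d>0$ I would trace the four phases and read off the acceptance condition through the alternating semantics. The type-(1) instructions of phase~(1) nondeterministically populate $H_d$; since $S\cap H_d=\pusty$ and the sequence sets exactly one register of each category, the configurations reachable at $Q'_d$ are precisely those $\langle Q'_d,S'\rangle$ in which $S'$ extends $S$ by an $H,d$-code of an arbitrary $C^h$. The type-(2) jump of phase~(2) forces a universal split, so $\langle Q'_d,S'\rangle$ is accepting iff both successors are. In the $Q^B_d$ branch the copy instructions of phase~(3) are \emph{forced}: as $S'\cap B_d$ and $S'\cap H_d$ are codes, each check matches a unique register, so the branch deterministically reaches $\langle Q_{d-1},S^B\rangle$ with $S^B\cap B_{d-1}$ coding $C^b$ and $S^B\cap E_{d-1}$ coding $C^h$; symmetrically phase~(4) yields $\langle Q_{d-1},S^E\rangle$ with $B_{d-1}$ coding $C^h$ and $E_{d-1}$ coding $C^e$. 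Thus $\langle Q_d,S\rangle$ is accepting iff there is a choice of $C^h$ for which both $\langle Q_{d-1},S^B\rangle$ and $\langle Q_{d-1},S^E\rangle$ are accepting.

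The crux is then to verify that $S^B$ and $S^E$ satisfy the hypotheses at level $d-1$, so that the induction hypothesis applies. By construction $S^B\cap B_{d-1}$ and $S^B\cap E_{d-1}$ are the required codes of $C^b$ and $C^h$, while the copy instructions touch only level $d-1$; hence $S^B\cap H_{d-1}=\pusty$, and the premise $S\cap(B_e\cup H_e\cup E_e)=\pusty$ for $e<d$ gives $S^B\cap(B_e\cup H_e\cup E_e)=\pusty$ for all $e<d-1$. The registers still set at level $d$ are harmless, since no instruction used in the level-$(d-1)$ subcomputation inspects them, so by monotonicity they cannot alter its outcome. Applying the induction hypothesis to each branch, $\langle Q_d,S\rangle$ is accepting iff some $C^h$ is reachable from $C^b$ in at most $2^{d-1}$ steps and $C^e$ is reachable from $C^h$ in at most $2^{d-1}$ steps, which is equivalent to $C^e$ being reachable from $C^b$ in at most $2^{d-1}+2^{d-1}=2^d$ steps (split any path of length $k\le 2^d$ at step $\min(k,2^{d-1})$, and conversely concatenate).

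I expect the main difficulty to lie in the bookkeeping rather than the combinatorics: one must argue carefully that the positive-test, write-once discipline of monotonic automata prevents the copy phases from ``cheating''. The uniqueness baked into the notion of a code is exactly what forces each copy to be faithful and, together with the emptiness premises on $H_d$ and on the lower levels, what keeps leftover higher-level registers from producing spurious acceptance. The halving identity for reachability is then routine, once the ``at most $2^d$'' formulation — which permits the degenerate split with an empty second half — is used.
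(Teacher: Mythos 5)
Your proof is correct and fills in, with the right level of care, exactly the argument the paper leaves as a two-line sketch: the Savitch-style halving, the faithfulness of the guess/copy phases forced by the uniqueness built into codes, and the harmlessness of leftover higher-level registers. The only (inessential) difference is organizational: the paper runs the direction (1)$\Rightarrow$(2) as an induction on the definition of acceptance and (2)$\Rightarrow$(1) as an induction on $d$, whereas you establish the biconditional in a single induction on $d$, which works equally well since the instruction chains at level $d$ are forced.
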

\begin{proof} (1) $\To$ (2): 
  Induction \wrt the definition of acceptance.

 (2) $\To$ (1): Induction \wrt~$d$. 
\end{proof}


\begin{theorem}\label{wdiugi44}
  The  halting problem for monotonic automata is \pspace-complete.
\end{theorem}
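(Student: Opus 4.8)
The plan is to split the statement into its two matching bounds and treat them separately. Membership in \pspace I would simply read off from Lemma~\ref{uusuD4ERR}, which already shows that testing whether a given configuration is accepting lies in $\mbox{\sc APtime}\subseteq\pspace$; nothing more is needed for the upper bound. For \pspace-hardness I would complete the correctness argument for the reduction constructed just above, which sends a linear bounded automaton~$\A$ together with an input $w=x^1\ldots x^n$ to the monotonic automaton~$\M$ and the initial configuration $C_0=\<0,S_0\>$. Since every $L\in\pspace$ reduces in logarithmic space to a context-sensitive language recognizable by an LBA, it suffices to reduce {\sc In-place Acceptance} to our halting problem, which is exactly what this construction does.

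The core step will be to run the two phases of~$\M$ against Lemma~\ref{ueuEFUOE77}. Starting from $C_0$, the initial phase (in state~$0$) nondeterministically guesses an accepting configuration $C^e$ of~$\A$, sets the corresponding $E,p(n)$-code, and then enters state~$Q_{p(n)}$. The resulting configuration $\<Q_{p(n)},S\>$ has $S\cap B_{p(n)}$ equal to the $B,p(n)$-code of the initial configuration $C^b$ of~$\A$, untouched since it was $S_0$, and $S\cap E_{p(n)}$ equal to the $E,p(n)$-code of~$C^e$, while $S\cap H_{p(n)}=\pusty$ and all registers at levels $e<p(n)$ remain unset. These are precisely the hypotheses of Lemma~\ref{ueuEFUOE77} with $d=p(n)$, so I would conclude that $\<Q_{p(n)},S\>$ is accepting \iff $C^e$ is reachable from $C^b$ in at most $2^{p(n)}$ steps of~$\A$.

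It then remains to match this with genuine acceptance of~$\A$. I would fix the polynomial~$p$ so that $2^{p(n)}$ bounds the number of distinct configurations of an LBA on $n$ cells (there are only $2^{O(n)}$ of them), whence $\A$ accepts~$w$ \iff some accepting configuration is reachable from the initial one within $2^{p(n)}$ steps. Combining this with the equivalence of the previous paragraph, and with the fact that the initial phase of~$\M$ ranges existentially over all accepting guesses $C^e$, gives $\A$ accepts~$w$ \iff $C_0$ is an accepting configuration of~$\M$. Finally I would verify that $\M$ and $C_0$ are produced in logarithmic space: the registers and states are indexed only by $d\le p(n)$, positions $i\le n$, and the states and tape symbols of~$\A$, so there are polynomially many of them, and each instruction block is emitted by a uniform local rule.

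The genuine mathematical content is encapsulated in Lemma~\ref{ueuEFUOE77}, which I am entitled to assume; given it, the theorem is essentially an assembly argument. The one point that still demands care is the interface with that lemma: I must make sure the guessing phase installs exactly its preconditions (an untouched $B,p(n)$-code, a freshly guessed $E,p(n)$-code, and empty $H$ and lower levels) and that the chosen~$p$ makes the lemma's ``at most $2^{p(n)}$ steps'' bound coincide with real acceptance. The monotonic, write-once behaviour of the registers, together with the disjointness of the lower levels, is what prevents the two branches of each universal split from interfering, and it is this bookkeeping that I would state explicitly rather than the top-level reduction itself.
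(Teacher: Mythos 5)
Your proposal is correct and follows the paper's own route exactly: the upper bound is read off from Lemma~\ref{uusuD4ERR}, and hardness is obtained by applying Lemma~\ref{ueuEFUOE77} at $d=p(n)$ with $C^b$ and $C^e$ the initial and (guessed) final configurations of~$\A$, after the logspace reduction of \pspace to LBA acceptance. The extra care you devote to checking the lemma's preconditions after the guessing phase and to the choice of~$p$ is a sound elaboration of what the paper leaves implicit, not a deviation.
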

\begin{proof} 
  Lemma~\ref{uusuD4ERR} implies 
  that the problem  
  belongs to \pspace.
  The hardness part is a consequence 
of Lemma~\ref{ueuEFUOE77} applied for $d=p(n)$ with $C^b$ and $C^e$ 
being, respectively, the initial and final configuration of $\A$.
\end{proof}

\subsubsection*{Automata to formulas}

In order to finish our reduction of provability in IPC to
  provability in IIPC we need
to prove a specific converse of Proposition~\ref{wddoDH32}.  %
Consider a monotonic automaton $\M= \<Q,R,f,\I\>$, and 
an ID of the form $C_0=\<q_0,S_0\>$. Without loss of generality 
we can assume that \mbox{$Q\cap R = \pusty$.}
Using
states and registers of $\M$ as propositional atoms,
we define a~set
of axioms $\Gamma$ so that $\Gamma \vdash q_0$ \iff 
$C_0$ is accepting. The set $\Gamma$ contains the atoms
$S_0\cup\{f\}$;
other axioms in~$\Gamma$ represent instructions of~$\M$.

An axiom for
$q: \czek{S_1}{S_2}{p}$, where
$S_1=\{s^1_1,\ldots,s^k_1\}$ and $S_2=\{s^1_2,\ldots,s^\ell_2\}$, is:

(1) \hfil $s^1_1\to\cdots\to s^k_1\to (s^1_2\to\cdots \to s^\ell_2\to p)\to q$.

And for every instruction $q: \jmp{p_1}{p_2}$, 
 there is an axiom 

(2) \hfil $p_1\to p_2 \to q$.

Observe that all the above axioms are of order at most two, hence
the formula $\Gamma\to q_0$ has order at most three.


\begin{lemma}\label{dtfovfe7}
  Given
  the above definitions,
the configuration~$\<q_0,S_0\>$ is accepting \iff $\Gamma\vdash q_0$ holds.
\end{lemma}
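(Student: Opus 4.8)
The plan is to prove a more general claim and then specialise it. For a set of registers $S\subseteq R$, let $\Gamma_S$ be the context built from all the instruction axioms of forms~(1) and~(2), the atom~$f$, and one atomic assumption for every register in~$S$; thus $\Gamma=\Gamma_{S_0}$, since $\Gamma$ already contains $S_0\cup\{f\}$. I will show that for every state $q\in Q$ and every $S\subseteq R$,
$$
\<q,S\>\ \mbox{is accepting}\qquad\Longleftrightarrow\qquad\Gamma_S\vdash q,
$$
and the lemma is then the instance $S=S_0$, $q=q_0$.

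For the left-to-right implication I would induct on the finite witnessing tree of acceptance. If $q=f$ then $f\in\Gamma_S$ and $\Gamma_S\vdash f$ at once. If acceptance comes from a type-(1) step $q:\czek{S_1}{S_2}{p}$ with $S_1\subseteq S$, $S'=S\cup S_2$, and $\<p,S'\>$ accepting, then by the induction hypothesis $\Gamma_{S'}\vdash p$; since $\Gamma_{S'}$ is $\Gamma_S$ together with the assumptions for $S_2$, discharging them by $\ell$ applications of $(\to I)$ gives $\Gamma_S\vdash s_2^1\to\cdots\to s_2^\ell\to p$, after which axiom~(1), applied to the atoms $s_1^1,\dots,s_1^k$ (available because $S_1\subseteq S$) and to this proof, yields $\Gamma_S\vdash q$. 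A type-(2) step $q:\jmp{p_1}{p_2}$ is handled symmetrically from $\Gamma_S\vdash p_1$ and $\Gamma_S\vdash p_2$ using axiom~(2).

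The converse is the substantial direction, and I would argue it through long normal forms. Using Lemma~\ref{trzypotrzy} I fix an lnf $M$ with $\Gamma_S\vdash M:q$ and induct on the size of $M$. As $\Gamma_S$ is purely implicational and $q$ is an atom, $M=xN_1\dots N_n$ with a head $x:\psi\in\Gamma_S$ whose target is $q$. The decisive observation, and the single place where the standing assumption $Q\cap R=\pusty$ is used, is that no instruction axiom has a register as its target; consequently for a register $r$ one has $\Gamma_S\vdash r$ exactly when $r\in S$, the only lnf proof being a lone variable. I then split on the shape of $\psi$: if $\psi=f$ then $q=f$ and $\<q,S\>$ is accepting; if $\psi$ is axiom~(2) for $q:\jmp{p_1}{p_2}$, then $N_1:p_1$ and $N_2:p_2$ are smaller lnfs, so the induction hypothesis makes $\<p_1,S\>$ and $\<p_2,S\>$ accepting and hence $\<q,S\>$ accepting; if $\psi$ is axiom~(1) for $q:\czek{S_1}{S_2}{p}$, then the arguments $N_1,\dots,N_k$ prove the register atoms $s_1^1,\dots,s_1^k$, forcing $S_1\subseteq S$ by the observation above, while the final argument, being of implication type, is an abstraction $\lambda y_1\dots y_\ell.\,N'$ with $\Gamma_{S\cup S_2}\vdash N':p$; as $N'$ is smaller, the induction hypothesis gives that $\<p,S\cup S_2\>$ is accepting, and with $S_1\subseteq S$ this makes $\<q,S\>$ accepting through the transition of instruction~(1).

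I expect the main obstacle to lie in the bookkeeping of this converse: reading off that the head variable's type must be exactly one of the three axiom shapes, and in particular justifying that proving an argument of register type is the same as that register already belonging to $S$. This is precisely where $Q\cap R=\pusty$ is essential, as it guarantees that registers behave as passive atoms that can only be assumed, never derived through an instruction; this is what lets the logical act of supplying an assumption faithfully mirror the automaton's positive register tests and its monotone setting of new registers.
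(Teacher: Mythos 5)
Your proposal is correct and follows essentially the same route as the paper: the same generalisation to an arbitrary register set, an induction on the acceptance tree for one direction, an induction on long normal proofs for the other, and the same key observation that (thanks to $Q\cap R=\pusty$) a register atom is provable only by a lone assumption variable, so the arguments of a type-(1) axiom force $S_1\subseteq S$ and the abstraction argument carries the recursion to $\<p,S\cup S_2\>$. The only difference is cosmetic bookkeeping (you fold $S_0$ into $\Gamma_S$ where the paper keeps $\Gamma,S\vdash q$ against $\<q,S\cup S_0\>$).
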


\begin{proof}
For every $S\subseteq R$ and $q\in Q$,  we prove that 
$\Gamma,S\vdash q$ \iff $C=\<q, S\cup S_0\>$ is accepting. 
We think of 
$\Gamma$
as of a type environment where each 
axiom is a declaration of a~variable. 

\relax{$(\Leftarrow)$} Induction \wrt long normal 
proofs. With $\to$ as the only connective,
any normal proof $T$ of an atom $q$ must be 
a variable or an application, 
say $T=xN_1\ldots N_m$, 
The case of $x\:f$ (i.e., $q=f$) is
obvious; otherwise the type of $x$ corresponds 
to an instruction. There are two possibilities:

(1) If $x \: s^1_1\to\cdots\to s^k_1\to (s^1_2\to\cdots\to  s^\ell_2\to p)\to q$,

then actually we obtain that
%
$T=xD_1\ldots D_k(\lambda u_1\:s^1_2\ldots\lambda u_\ell\:s^\ell_2.\,P)$.
Terms $D_1,\ldots,D_k$ are, respectively, of types $s^1_1,\ldots,s^k_1$, 
and they must be variables declared in~$S$,
as there are no  other assumptions with targets
$s^1_1,\ldots,s^k_1$.
Hence the
instruction corresponding to~$x$ is applicable at $C=\<q,S\>$
and yields $C'=\<p,S\cup S'\>$,
where $S'=S\cup \{s^1_2,\ldots,s^\ell_2\}$.
In addition we have
$\Gamma,S\cup S'\vdash P:p$, 
whence
$C'$ is accepting  by the induction hypothesis.

(2) If $x$ has type 
$p_1\to p_2 \to q$,
where $p_1, p_2\in  Q$,

then $T=xT_1T_2$.
The appropriate universal instruction leads to two
IDs:
$C_1=\<p_1,S\>$ and $C_2=\<p_2,S\>$. The \relax{judgements}
$\Gamma,S\vdash T_1:p_1$ and $\Gamma,S\vdash T_2:p_2$ 
obey the induction hypothesis. Thus $C_1, C_2$ are accepting
and so
is~$C$.

$(\To)$ Induction \wrt the definition of acceptance. 
\end{proof}


\begin{proposition}\label{eyfeyg022}
The halting problem for monotonic automata reduces to the 
provability problem \relax{}
for formulas in \iipc of order at most three.
\end{proposition}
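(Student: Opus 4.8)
The statement is an assembly of the construction and the lemma developed immediately above, so the plan is to package these and add the single missing logical step. Given an instance $(\M, q_0, S_0)$ of the halting problem, I would first run the construction preceding Lemma~\ref{dtfovfe7} to produce the finite axiom set $\Gamma$ (the atoms $S_0\cup\{f\}$ together with one implicational axiom per instruction of~$\M$). I would then take the single \iipc formula $\Gamma\to q_0$ and claim that the map $(\M, q_0, S_0)\mapsto \Gamma\to q_0$ is the required reduction.

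The correctness rests on two observations. First, by repeated application of the rules $(\to I)$ and $(\to E)$ from Figure~\ref{fig:rules} (equivalently, by abstracting the proof term over, respectively applying it to, the variables declared in $\Gamma$) we have $\vdash \Gamma\to q_0$ \iff $\Gamma\vdash q_0$; this is the only genuinely new step and it is immediate. Second, Lemma~\ref{dtfovfe7} gives $\Gamma\vdash q_0$ \iff the configuration $\<q_0,S_0\>$ is accepting. Chaining the two equivalences yields $\vdash\Gamma\to q_0$ \iff $C_0=\<q_0,S_0\>$ is an accepting configuration of~$\M$, which is exactly what the reduction requires.

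It then remains to check the two side conditions. For the order bound I would reuse the observation already recorded before Lemma~\ref{dtfovfe7}: every axiom of type~(1), namely $s^1_1\to\cdots\to s^k_1\to(s^1_2\to\cdots\to s^\ell_2\to p)\to q$, has order at most~$2$ (its only nested argument has order~$1$), and every axiom of type~(2), namely $p_1\to p_2\to q$, has order~$1$; since $r(\Gamma\to q_0)=1+\max_i r(\varphi_i)$ over the members $\varphi_i$ of $\Gamma$, we obtain $r(\Gamma\to q_0)\le 3$. For the resource bound, $\Gamma$ contains one axiom per instruction and one atom per element of $S_0\cup\{f\}$, each of size linear in~$\M$, so $\Gamma\to q_0$ has polynomial size and can be emitted in logarithmic space.

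The main obstacle of the whole development is not located in this proposition: it has already been discharged in Lemma~\ref{dtfovfe7}, whose two inductions (over long normal proofs in one direction and over the definition of acceptance in the other) establish the register-by-register match between proof search from $\Gamma$ and accepting runs of~$\M$. Consequently the only things I expect to verify here are the $(\to I)/(\to E)$ step and the order and space accounting, all of which are routine.
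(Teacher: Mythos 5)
Your proposal is correct and matches the paper exactly: the paper gives no separate proof of Proposition~\ref{eyfeyg022}, treating it as the immediate packaging of the construction of $\Gamma$, the order observation (``all the above axioms are of order at most two, hence $\Gamma\to q_0$ has order at most three''), and Lemma~\ref{dtfovfe7}. Your only added step, the deduction-theorem passage from $\Gamma\vdash q_0$ to $\vdash\Gamma\to q_0$ via $(\to I)$, and your logspace accounting are both routine and exactly what the paper leaves implicit.
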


Putting together Propositions~\ref{wddoDH32} and~\ref{eyfeyg022} 
and Theorem~\ref{wdiugi44}
we obtain a number of consequences. 

\begin{theorem}\label{QEA2552}
  Provability in \ipc and \iipc are
  \pspace-complete.
\end{theorem}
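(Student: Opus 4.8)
The plan is to read this theorem as a corollary of the machinery assembled in this section, proving membership and hardness separately and then transferring each property between \ipc and \iipc using the fact that \iipc is literally a syntactic fragment of \ipc.

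First I would settle the upper bound. Proposition~\ref{wddoDH32} turns any \ipc formula $\Phi$ into a logspace-computable monotonic automaton $\M_\Phi$ and a state $q$ with $\vdash\Phi$ \iff the configuration $\<q,\pusty\>$ is accepting. Composing this reduction with the \pspace decision procedure for the halting problem (recorded in Lemma~\ref{uusuD4ERR}, and hence in Theorem~\ref{wdiugi44}), and invoking closure of \pspace under logspace reductions, places provability in \ipc in \pspace. Provability in \iipc is then in \pspace for free: every implicational formula is in particular an \ipc formula, so the very same algorithm decides it.

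For the lower bound I would run the construction in the opposite direction. Theorem~\ref{wdiugi44} gives \pspace-hardness of the halting problem, and Proposition~\ref{eyfeyg022} reduces that problem to provability of \iipc formulas of order at most three. Here one packages the judgement $\Gamma\vdash q_0$ supplied by that reduction as the single closed formula $\Gamma\to q_0$, which is provable exactly when $\Gamma\vdash q_0$ and which, as already observed in the construction, stays of order at most three. This makes provability in \iipc \pspace-hard, and since each such instance is verbatim an \ipc instance, provability in \ipc inherits the hardness.

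The step demanding the most care is bookkeeping rather than conceptual: verifying that the two reductions genuinely compose as logspace (hence polynomial-time) many-one reductions, and that the \emph{closed}-formula problem $\vdash\Gamma\to q_0$ is the one actually produced, so that uncurrying the context into the formula does not raise its order. Granting the earlier results, this is routine, and the theorem follows by combining Propositions~\ref{wddoDH32} and~\ref{eyfeyg022} with Theorem~\ref{wdiugi44}.
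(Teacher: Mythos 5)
Your argument is exactly the paper's: the theorem is obtained by composing Proposition~\ref{wddoDH32} (formulas to automata, giving membership) with Theorem~\ref{wdiugi44} and Proposition~\ref{eyfeyg022} (hardness via the halting problem and its encoding as order-three \iipc formulas), with the \ipc/\iipc transfer coming from the fragment relation. Your spelled-out version, including the uncurrying of $\Gamma\vdash q_0$ into $\Gamma\to q_0$, just makes explicit what the paper leaves as routine.
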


While the statement of Theorem~\ref{QEA2552} is well-known~\cite{star79},
the present automata-theoretic proof directly exhibits the computational
capability of proof-search: polynomial space Turing Machines are encoded
almost directly into propositional formulas (the automaton being merely 
a technically convenient way to manipulate them). 

Without loss of generality we can interpret problems in
\pspace
as reachability questions concerning some objects or configurations
of polynomial size. The construction used in the proof of 
Theorem~\ref{wdiugi44} (the simulation of LBA) 
reflects a natural, possibly interactive, approach to solve such
questions: split the reachability task into two, by choosing some 
intermediate configuration.  An example that comes to mind is 
the Sokoban game: the set of winning positions is a~context-sensitive 
language and one can try to solve the puzzle by determining some 
milestone states. 

Another consequence of our development 
is that the computational power of \ipc is fully 
contained in \iipc, and in an apparently small fragment.


\begin{theorem}\label{redran3}
  For every formula $\varphi$ of full \ipc 
one can
  construct (in logspace) an implicational formula $\psi$ of order at most three
such that 
$\psi$ is provable iff so is~$\varphi$.
\end{theorem}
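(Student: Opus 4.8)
The plan is to compose the two reductions already established. Given a full \ipc formula $\varphi$, I first invoke Proposition~\ref{wddoDH32} to obtain, in logspace, a monotonic automaton $\M_\varphi=\<Q,R,f,\I\>$ together with a state $q$ such that $\vdash\varphi$ holds if and only if the configuration $\<q,\pusty\>$ of $\M_\varphi$ is accepting. I then feed this automaton, with the distinguished initial configuration $\<q,\pusty\>$, into the construction underlying Proposition~\ref{eyfeyg022}. That is, I apply Lemma~\ref{dtfovfe7} with $q_0=q$ and $S_0=\pusty$: using the states and registers of $\M_\varphi$ as propositional atoms, I build the set of axioms $\Gamma$ (the atom $f$ together with one implicational axiom for each instruction of $\M_\varphi$) and set $\psi=\Gamma\to q$.

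The correctness is then a chain of equivalences. By Lemma~\ref{dtfovfe7}, $\<q,\pusty\>$ is accepting if and only if $\Gamma\vdash q$; by the deduction theorem for \iipc, $\Gamma\vdash q$ if and only if $\vdash\Gamma\to q$, i.e.\ $\vdash\psi$; and by Proposition~\ref{wddoDH32}, $\<q,\pusty\>$ is accepting if and only if $\vdash\varphi$. Putting these together yields $\vdash\psi$ iff $\vdash\varphi$. For the order bound I appeal to the observation recorded just before Lemma~\ref{dtfovfe7}: every axiom in $\Gamma$ has order at most two (a type-(1) axiom carries the single order-one argument $s^1_2\to\cdots\to s^\ell_2\to p$, while all its other arguments and the target are atomic), so $\psi=\Gamma\to q$ has order at most three by the definition $r(\sigma_1\to\cdots\to\sigma_k\to p)=1+\max_i r(\sigma_i)$.

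The one point requiring care is the logspace bound on the \emph{composite} construction, since the intermediate automaton $\M_\varphi$ is of polynomial size and cannot be stored in logarithmic space. I handle this in the standard way for composition of logspace reductions: the transducer producing $\psi$ runs the second reduction over $\M_\varphi$, but never stores $\M_\varphi$ explicitly; whenever it needs a particular register, state, or instruction of $\M_\varphi$, it recomputes that piece on demand by re-running the logspace machine of Proposition~\ref{wddoDH32} on $\varphi$. Because $|\M_\varphi|$ is polynomial in $|\varphi|$, every index into $\M_\varphi$ fits in $O(\log|\varphi|)$ bits, and the total work space is the sum of the (logarithmic) work spaces of the two machines plus this logarithmic pointer. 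Hence the whole map $\varphi\mapsto\psi$ is computable in logarithmic space. This is the only genuinely delicate step; everything else is bookkeeping over the two cited results.
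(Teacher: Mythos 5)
Your proposal is correct and follows exactly the route the paper intends: Theorem~\ref{redran3} is derived by composing Proposition~\ref{wddoDH32} with the automata-to-formulas construction of Lemma~\ref{dtfovfe7}/Proposition~\ref{eyfeyg022}, with the order-three bound coming from the observation that all instruction axioms have order at most two. Your explicit treatment of the logspace composition via on-demand recomputation is a detail the paper leaves implicit, but it is the standard argument and raises no issues.
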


\section{An intuitionistic order hierarchy}\label{sec:hierarchy}

In Section~\ref{monaut}, we observed that provability in the whole \ipc is
faithfully reflected by provability for formulas of \iipc of that have
order at most three. Proving any formula is therefore at most as difficult
as proving some formula of order three. But is every formula {\it equivalent\/}
to one of order~three? The answer is negative:
in the case of 
\ipc we have a strict order hierarchy of formulas. 
Define by induction $\varphi^1=p_1$ and $\varphi^{k+1}=\varphi^k\to p_{k+1}$.
That is, 

\hfil $\varphi^k=(\cdots((p_1\to p_2)\to p_3)\to\cdots\to p_{k-1})\to p_k$.


\begin{lemma}\label{lemataleksego}
  Every proof of $\varphi^k\to\varphi^k$ is $\beta\eta$-convertible
  to  the identity combinator $\lambda x.x$.
\end{lemma}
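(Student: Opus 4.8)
The plan is to reduce the claim to an analysis of long normal forms. By strong normalisation and confluence (Theorem~\ref{theorem:basic}) every closed proof $M$ of $\varphi^k\to\varphi^k$ has a unique $\beta$-normal form, and its canonical $\eta$-expansion is a closed long normal form $\widehat{M}$ of the same type with $M=_{\beta\eta}\widehat{M}$ (long normal inhabitants exist by Lemma~\ref{trzypotrzy}). Hence it suffices to show that $\varphi^k\to\varphi^k$ has a \emph{unique} closed long normal inhabitant and that it is $\eta$-equal to $\lambda x.x$; the statement then follows at once, since the $\eta$-long identity $\eta$-reduces to $\lambda x.x$.

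The whole argument rests on one observation: $\tg{\varphi^i}=p_i$, and the atoms $p_1,\dots,p_k$ are pairwise distinct. Consequently, in any context whose declarations all have the form $x_i\:\varphi^i$, each atom is the target of at most one declared variable. Since in \iipc a long normal form of atomic type $p_j$ is a variable applied to long normal arguments whose head must have target $p_j$, the head variable — if a proof exists at all — is uniquely determined.

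I would then prove, by induction on $j=1,\dots,k$, the auxiliary statement that in the context $\Delta_j=\{x_j\:\varphi^j,\dots,x_k\:\varphi^k\}$ the type $\varphi^j$ has a unique long normal inhabitant, and that it is $\beta\eta$-equal to $x_j$. For $j=1$ the goal $\varphi^1=p_1$ is atomic and the only variable of $\Delta_1$ with target $p_1$ is $x_1\:p_1$ itself, so the unique inhabitant is $x_1$. For $j>1$ we have $\varphi^j=\varphi^{j-1}\to p_j$, so an inhabitant must be $\lambda x_{j-1}.\,N$ with $N$ a long normal form of $p_j$ in $\Delta_{j-1}=\Delta_j\cup\{x_{j-1}\}$; the unique variable there with target $p_j$ is $x_j\:\varphi^{j-1}\to p_j$, which takes a single argument, forcing $N=x_jR$ with $R$ a long normal form of $\varphi^{j-1}$ in $\Delta_{j-1}$. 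By the induction hypothesis $R$ is unique with $R=_{\beta\eta}x_{j-1}$, so the inhabitant $\lambda x_{j-1}.\,x_jR$ is unique and $=_\eta\lambda x_{j-1}.\,x_jx_{j-1}=_\eta x_j$.

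Instantiating the auxiliary statement at $j=k$ completes the argument: a closed long normal inhabitant of $\varphi^k\to\varphi^k$ must be $\lambda x_k.\,N$ with $N$ the unique long normal form of $\varphi^k$ in $\{x_k\}=\Delta_k$, so $N=_{\beta\eta}x_k$ and the whole term is $\beta\eta$-equal to $\lambda x_k.x_k=\lambda x.x$. I expect the only genuine step to be the forcing of the head variable, which depends entirely on the distinctness of $p_1,\dots,p_k$; the remainder is routine bookkeeping of $\eta$-expansions together with the straightforward induction on $j$.
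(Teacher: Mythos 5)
Your proof is correct and follows essentially the same route as the paper's: an induction along the tower $\varphi^1,\ldots,\varphi^k$ in which the pairwise distinct targets $p_1,\ldots,p_k$ force the head variable at each level, concluding by an $\eta$-collapse to the identity. The only cosmetic differences are that you phrase the induction as uniqueness of the long normal inhabitant in the specific contexts $\Delta_j$, whereas the paper argues about plain normal forms under a slightly more general side context $\Gamma$ whose targets avoid $p_1,\ldots,p_k$ (and therefore handles the extra case $M=X$ explicitly).
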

\begin{proof}
We prove  the following generalized statement by induction \wrt~
the number~$k$.
Let 
$\tg\gamma\not\in\{p_1,\ldots,p_k\}$, for
all $\gamma\in\Gamma$, 
and let $\Gamma, X\:\varphi^k\vdash M:\varphi^k$, where
$M$ is in normal form. Then $M=_{\beta\eta}X$.
Indeed, first note that $X$ is the only assumption with target~$p_k$,
hence for $k=1$ the claim follows immediately. Otherwise either
 $M=X$ or $M=\lambda Y.\,M'$ with a derivation
$\Gamma, X\:\varphi^k, Y\:\varphi^{k-1}\vdash M':p_k$.
This is only possible when 
$M'=XM''$, where 
\mbox{$\Gamma, X\:\varphi^k, Y\:\varphi^{k-1}\vdash M'': \varphi^{k-1}$}.
By the induction hypothesis
 for
$\Gamma' = \Gamma \cup \{ x:\varphi^k\}$
and $Y:\varphi^{k-1}$,
we have $M''=_{\beta\eta}Y$, hence
$M= \lambda Y.\,XM'' =_{\beta\eta}\lambda Y.\,XY=_{\beta\eta} X$.
\end{proof}

\begin{theorem}\label{hierimp}
For every $k$, no implicational formula of order  less than~$k$ is 
intuitionistically equivalent to $\varphi^k$. 
\end{theorem}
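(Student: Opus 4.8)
The plan is to show that $\varphi^k$ cannot be intuitionistically equivalent to any implicational formula $\sigma$ of order less than $k$. Suppose toward a contradiction that such a $\sigma$ exists, so that both $\varphi^k\vdash\sigma$ and $\sigma\vdash\varphi^k$ hold. The governing intuition is that a formula of low order is too ``shallow'' to recover the deeply nested structure of $\varphi^k$, and Lemma~\ref{lemataleksego} is the tool that pins down this deficiency: it says the only proof of $\varphi^k\to\varphi^k$ is essentially the identity, so any roundtrip through $\sigma$ and back must reconstitute the identity combinator. **First I would** set up the composition: from $\varphi^k\vdash\sigma$ we obtain a term $F$ with $x\:\varphi^k\vdash F:\sigma$, and from $\sigma\vdash\varphi^k$ a term $G$ with $y\:\sigma\vdash G:\varphi^k$. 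Substituting yields a closed proof $\lambda x\:\varphi^k.\,G[y:=F]$ of $\varphi^k\to\varphi^k$, which by Lemma~\ref{lemataleksego} is $\beta\eta$-convertible to $\lambda x.x$.

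**Next I would** extract the contradiction by analysing how $G$ can depend on $y\:\sigma$. The key point is an \emph{order argument}: because $\ord(\sigma)<k$, the target atoms of $\sigma$ and of its argument subformulas are ``too low'' to let a normal proof thread the identity through them. Concretely, I would examine long normal forms (available by Lemma~\ref{trzypotrzy}) of $F$ and $G$ and track which assumption serves as the head variable at each stage. In the identity proof forced by Lemma~\ref{lemataleksego}, the variable $x\:\varphi^k$ must eventually be applied to an argument of type $\varphi^{k-1}$; but manufacturing a proof of $\varphi^{k-1}$ that itself genuinely uses $x$ (rather than being supplied from $\sigma$) requires peeling $\varphi^k$ down through all $k$ implicational layers. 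The order of $\sigma$ bounds how many such nested layers any proof passing through $y\:\sigma$ can supply, so a formula of order $<k$ forces a proof that ignores some layer of $\varphi^k$, contradicting that the composite reduces to the identity.

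**The hard part will be** making the ``order bounds nesting depth'' step precise, since one must rule out that $\sigma$, though of low order, smuggles in the needed depth through a clever combination of its arguments. I expect the cleanest route is to strengthen the induction: rather than arguing about a single $\sigma$, I would prove by induction on $k$ that if $\Gamma\vdash\varphi^k$ with every $\gamma\in\Gamma$ of order less than $k$ (and with no target among $p_1,\dots,p_k$ except as forced), then no normal proof exists — mirroring the generalized statement of Lemma~\ref{lemataleksego}. The equivalence $\sigma\equiv\varphi^k$ would then be contradicted by feeding the ``$\sigma\vdash\varphi^k$'' direction into this strengthened claim, since $\sigma$ of order $<k$ has all its argument subformulas of order $<k-1$, matching the inductive environment. **Finally**, the base case $k=1$ is immediate because $\varphi^1=p_1$ is an atom and no formula of order less than~$1$ exists, so the statement is vacuously or trivially satisfied, and the inductive step transports the obstruction from $\varphi^{k-1}$ to $\varphi^k$ exactly as in the proof of Lemma~\ref{lemataleksego}.
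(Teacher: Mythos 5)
Your first step coincides exactly with the paper's: compose the two proof terms to obtain a closed inhabitant of $\varphi^k\to\varphi^k$, invoke Lemma~\ref{lemataleksego} to conclude it is $\beta\eta$-equal to the identity, and hence that $\varphi^k$ is a retract of $\sigma$. Where you diverge is in how the retraction is turned into a lower bound on the order of $\sigma$. The paper does not prove this itself: it cites a published result on retractions in the simply typed lambda-calculus (Prop.~4.5 of \cite{ru-retrakcje02}), which says that a retract cannot exceed the order of the type it is a retract of. You instead try to supply that reasoning directly, and this is where the gap lies.

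The concrete plan you give for the ``hard part'' does not work. Your proposed strengthened induction --- if $\Gamma\vdash\varphi^k$ with every $\gamma\in\Gamma$ of order less than $k$ then no normal proof exists --- is false without the parenthetical side condition (take $\Gamma=\{p_k\}$: an atom of order $0$ proves $\varphi^k=\varphi^{k-1}\to p_k$ outright by weakening and abstraction), and with the side condition excluding targets among $p_1,\dots,p_k$ it is inapplicable, because any $\sigma$ with $\sigma\vdash\varphi^k$ must itself have target $p_k$ (the head variable of a long normal proof of $p_k$ from $\sigma,\varphi^{k-1}$ cannot be $\varphi^{k-1}$, whose target is $p_{k-1}$). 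More fundamentally, no lemma that consumes only the direction $\sigma\vdash\varphi^k$ can finish the proof, since that direction alone is never contradictory; the obstruction has to be extracted from the retraction itself, i.e.\ from the requirement that the normal form of $G[y:=F]$ be the full $\eta$-expansion of the identity. Your middle paragraph gestures at exactly this (``the order of $\sigma$ bounds how many nested layers any proof passing through $y\:\sigma$ can supply''), but that sentence \emph{is} the retract--order theorem: it is a nontrivial statement requiring its own analysis of how the implicational layers of $\varphi^k$ thread through the normal forms of $F$ and $G$ under substitution and reduction. Either carry out that analysis in full or, as the paper does, cite the known result.
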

\begin{proof}
  If $\vdash \varphi^k\oto \alpha$
  then there are closed terms 
$T:\varphi^k\to \alpha$ and $N:\varphi^k\to \alpha$. The composition
$\lambda x.\, N(T x)$ is a combinator of type $\varphi^k\to \varphi^k$,
and by Lemma~\ref{lemataleksego} it must be \mbox{$\beta\eta$-equivalent} 
to identity. That is, $\varphi^k$ is a {\it retract\/} of $\alpha$, 
in the sense of~\cite{ru-retrakcje02}. It thus follows 
from~\cite[Prop.~4.5]{ru-retrakcje02} that~$\alpha$ must be of order
at least~$k$.
\end{proof}

Interestingly enough, Theorem~\ref{hierimp} 
stays in contrast with the
situation in classical logic.  Every propositional formula is
classically equivalent to a formula in conjunctive normal form
(CNF). If implication is the only connective then we have a similar
property.

\begin{proposition}\label{klas3}
Every implicational formula is classically equivalent to a formula
of order at most~three.
\end{proposition}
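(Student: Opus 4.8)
The plan is to argue entirely in the classical (Boolean) semantics, using that classical equivalence is a congruence, so that any subformula may be replaced by a classically equivalent one. First I would bring an arbitrary implicational formula into a tightly controlled disjunctive shape, and then re-encode that shape as an implicational formula of bounded nesting. Write $\varphi=\sigma_1\to\cdots\to\sigma_n\to p$, where $p$ is the atomic target. Classically $\varphi$ is equivalent to $p\lor\lnot\sigma_1\lor\cdots\lor\lnot\sigma_n$, and the whole argument rests on understanding the shape of the disjuncts $\lnot\sigma_i$.

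The key step I would isolate is a \emph{structure} claim: every implicational formula is classically equivalent to $p\lor\bigvee_t T_t$, where $p$ is an atom and each $T_t$ is a conjunction of literals containing \emph{at least one negative literal}. To prove it, note that $\sigma_i=\rho^i_1\to\cdots\to\rho^i_{m_i}\to q_i$, so $\lnot\sigma_i$ is classically equivalent to $\lnot q_i\land\rho^i_1\land\cdots\land\rho^i_{m_i}$. Distributing the positive factors $\rho^i_j$ into disjunctive normal form, every surviving term still literally contains the negative literal $\lnot q_i$ (terms that also acquire $q_i$ collapse to falsity and are dropped). Hence every disjunct coming from a premise $\sigma_i$ carries a negative literal, and the only purely positive disjunct is the single atom $p$. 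It is essential here to keep this \emph{unsimplified} expansion rather than a minimal normal form: e.g. $(a\to b)\to b$ expands naturally to $b\lor(a\land\lnot b)$, whereas its minimal form $a\lor b$ would spuriously exhibit two positive disjuncts.

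Given the structure claim, each term has the form $a_1\land\cdots\land a_s\land\lnot b_1\land\cdots\land\lnot b_u$ with $u\ge1$, and its negation is $a_1\to\cdots\to a_s\to(b_1\lor\cdots\lor b_u)$. The disjunction of atoms is encoded implicationally by the classical identity $b_1\lor\cdots\lor b_u\equiv(b_1\to b_u)\to\cdots\to(b_{u-1}\to b_u)\to b_u$, whose right-hand side is false exactly when all $b_j$ are false. Thus $T_t\equiv\lnot\sigma_t$, where $\sigma_t=a_1\to\cdots\to a_s\to(b_1\to b_u)\to\cdots\to(b_{u-1}\to b_u)\to b_u$ has only atoms and order-$1$ implications $b_j\to b_u$ as premises, so $r(\sigma_t)\le2$. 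Assembling one such premise per term, $\varphi$ is classically equivalent to $\sigma_1\to\cdots\to\sigma_N\to p$, all of whose premises have order at most $2$, hence a formula of order at most $3$.

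The only genuine obstacle is the structure claim, and its role is precisely to avoid the falsum constant, which the implicational fragment does not contain. A purely positive disjunct $a_1\land\cdots\land a_s$ (other than the single target atom) would be the negation of $a_1\to\cdots\to a_s\to\bot$ and thus not expressible without $\bot$; retaining the witnessing negative literal $\lnot q_i$ in every premise-derived disjunct is exactly what guarantees that each term is realisable as $\lnot\sigma_t$ for a genuine falsum-free implicational $\sigma_t$ of order at most two. Once that is in place, the disjunction encoding and the order bookkeeping are routine, and no size bound is needed since the statement only asks for classical equivalence.
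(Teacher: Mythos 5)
Your proof is correct, and while the overall skeleton is the same as the paper's --- pass to a clausal normal form and re-encode each clause as an implicational premise of order at most two, giving a formula of order at most three --- you resolve the two technical points by genuinely different devices. The paper computes a CNF of the conjunction of premises $\alpha_1\land\cdots\land\alpha_n$ and then (i) encodes the positive part $r_1\lor\cdots\lor r_k\lor s$ of a clause by borrowing the target: $(r_1\to p)\to\cdots\to(r_k\to p)\to s$, an equivalence that holds only when $p$ is false, which suffices because the whole implication is vacuously true when $p$ is true; and (ii) devotes a separate case to all-negative clauses, rendered as $q_1\to\cdots\to q_r\to p$ by the same ``$p$ acts as $\bot$'' trick. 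You instead (i) encode the positive disjunction self-containedly as $(b_1\to b_u)\to\cdots\to(b_{u-1}\to b_u)\to b_u$, an unconditional classical equivalence making no reference to the target, and (ii) replace the all-negative case by your structure claim that it never arises, since every DNF term of $\lnot\sigma_i$ inherits the negative literal $\lnot q_i$ from the target of $\sigma_i$. (A quicker semantic route to the same fact: every $\bot$-free implicational formula, hence every conjunction of premises, is satisfied by the all-true valuation, so no equivalent CNF can contain an all-negative clause --- the paper's Case~2 is actually vacuous for the formulas in the statement.) Your version buys premises each exactly equivalent to its clause and independent of $p$; the paper's is shorter and its extra case would also cover a variant in which $\bot$ is admitted as an atom, where all-negative clauses genuinely occur. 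Your order bookkeeping ($r(\sigma_t)\le 2$, hence $r(\varphi')\le 3$) and the caveat about using the unsimplified expansion rather than a minimal DNF are both right.
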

\begin{proof}
Given a formula of the form $\varphi=\alpha_1\to\cdots\to\alpha_n\to p$, 
we first 
translate the conjunction $\alpha_1\wedge\cdots\wedge\alpha_n$ into 
a conjunctive normal form $\beta_1\wedge\cdots\wedge\beta_m$, so that $\varphi$
is equivalent to
a formula 
$\psi=\beta_1\to\cdots\to\beta_m\to p$. Each $\beta_i$ 
is a disjunction of literals. For every $i$, there are two possibilities:

Case 1: At least one component of
$\beta_i$ is a 
variable, say 
$\beta_i=\neg q_1\vee\cdots\vee \neg q_r\vee r_1\vee\cdots\vee r_k\vee s$.
We replace $\beta_i$ in $\psi$ by the formula
$\beta'_i= q_1\to\cdots\to q_r\to (r_1\to p)\to\cdots\to (r_k\to p)\to s$.

Case 2: All components of
the formula 
$\beta_i$ are negated variables, 
i.e.,~$\beta_i = \neg q_1\vee\cdots\vee \neg q_r$. Then we replace such 
$\beta_i$ by
the formula 
$q_1\to\cdots\to q_r\to p$.

 For example, if 
$\psi = (s\vee q\vee \neg r)\to (\neg q\vee \neg r\vee \neg s)\to p$
then we rewrite $\psi$ as the formula
$(r\to (q\to p)\to s)\to (q\to r\to s\to p)\to p$. 
It is a routine exercise
to see that  the final result is a formula of rank at most 3 which is
classically equivalent to~the initial formula~$\varphi$ (note that 
if a~Boolean
 valuation falsifies~$p$ then it~satisfies $p\oto\bot$).
\end{proof}

\begin{example}\label{abcde}\rm
The formula $\varphi^5=(((p_1\to p_2)\to p_3)\to p_4)\to p_5$ 
is classically equivalent to this ``normal form'': 
$$
(p_1 \to (p_2 \to p_5) \to p_4) \to (p_3 \to p_4) \to p_5.
$$
\end{example}


\begin{remark}\rm Despite the contrast between the classical CNF collapse
and order hierarchy in intuitionistic logic,
there is still a strong analogy between CNF and order three fragment of \iipc.
The CNF formulas do indeed exhaust the whole expressive
power of classical propositional logic, but for a heavy price.
The value-preserving translation of a formula 
to conjunctive normal form is exponential, hence useless \wrt{} %
\np-completeness of CNF-SAT. That requires 
a polynomial satisfiability-preserving translation, very much 
like our provability-preserving reduction of the full \ipc to 
order three \iipc. 
\end{remark}

\section{Below order three} 
\label{sec:low-rank}

In this section we identify fragments of \iipc corresponding to the 
complexity classes P, \relax{\np and \conp.}


\subsection{Formulas of order two: deterministic polynomial time }

Implicational formulas of rank 1 are the same as propositional clauses
in logic programming. Therefore decision problem for rank 2 formulas
(no matter, classical or intuitionistic) amounts to propositional
logic programming, 
known to be P-complete~\cite{dantsin} 
\wrt logspace reductions.


\subsection{Order three minus: class \np}
\label{sec:three-minus}

We define here a subclass of \iipc 
for which the provability problem is \relax{\np-complete.}

We split the set $\tyv$ of propositional variables into \relax
 two disjoint infinite subsets  
$\tyv_0, \tyv_1\subseteq \tyv$, called respectively {\it data\/}
and {\it control\/} variables.
The role of control variables is to
occur as targets, the data variables only occur as arguments. 
The set of formulas $\Types_{3-}$ is defined by the 
grammar:
\begin{displaymath}
  \begin{array}{ll}
    \Types_{3-} ::= \tyv_1\ |\ \Types_{2-}\to\Types_{3-}\ |\ 
\tyv_0\to\Types_{3-}\\
    \Types_{2-} ::= \tyv_1\ |\ \tyv_0\to\Types_{2-} \mid 
 \Types_{1-}\to\Types_{1-} \\
    \Types_{1-} ::= \tyv_1 \mid \tyv_0 \to \Types_{1-}\\
  \end{array}
\end{displaymath}

Formulas in 
$\Types_{1-}$ are of the form $p_1\to\cdots\to p_n\to q$,
where $p_i\in\tyv_0$ and $q\in\tyv_1$. The set $\Types_{2-}$
consists of formulas of order at most two, with an $\tyv_1$ target,
and with at most one argument in~$\Types_{1-}$, and all other
arguments being variables in $\tyv_0$. Finally the $\Types_{3-}$
formulas are of shape $\sigma_1\to\sigma_2\to\cdots\to\sigma_n\to q$,
where $q\in\tyv_1$ and $\sigma_i\in\Types_{2-}\cup\tyv_0$,
for $i=1,\ldots,n$.

\begin{lemma}
  \label{lemma:linearinnp}
  Proof search for formulas in $\Types_{3-}$ is in \np.
\end{lemma}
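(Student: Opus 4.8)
The plan is to show that proof search for a $\Types_{3-}$ formula can be carried out by the monotonic automaton of Proposition~\ref{wddoDH32}, and that on this restricted class the automaton is in fact \emph{nondeterministic} (no universal branching), so that Lemma~\ref{uusuD4ERR} delivers the \np\ bound directly. The key observation driving this is that universal branching in the automaton arises solely from disjunctive targets, via the $\jmp{p_1}{p_2}$ instructions produced in case~(\ref{en:target}) of WA; since $\Types_{3-}$ is a purely implicational class, every target is an atom, so case~(\ref{en:target}) collapses to the Ben-Yelles variant~\ref{en:target}$'$, which never splits universally. Hence the only remaining issue is whether the \emph{depth} of recursion (equivalently, the length of an accepting computation branch) stays polynomial, so that a nondeterministic machine can guess and verify a branch in polynomial time.

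First I would analyse the structure of a long normal proof for $\Gamma\vdash q$ with $q\in\tyv_1$ and all formulas in $\Gamma$ belonging to $\Types_{2-}\cup\tyv_0$ (this is the shape of a $\Types_{3-}$ goal after stripping its outer arguments into the context by repeated use of case~(\ref{asudOFH})). By Lemma~\ref{trzypotrzy} it suffices to search for a long normal form, and by the grammar every assumption used as a head variable has order at most two. When the head variable $x\:\psi$ is used, $\psi\in\Types_{2-}$ has target in $\tyv_1$ and at most \emph{one} argument of order one (in $\Types_{1-}$), the rest being data variables in $\tyv_0$. The crucial point is that a $\Types_{1-}$-argument $p_1\to\cdots\to p_n\to r$ is discharged by $\lambda$-abstracting the $p_i\in\tyv_0$ and then proving the atom $r\in\tyv_1$; the newly available assumptions are only data atoms $p_i$. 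Thus each nested subgoal that is an atom in $\tyv_1$ spawns at most one further $\tyv_1$-subgoal, so the chain of genuinely branching recursive calls is linear, not tree-like.

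Next I would bound the search space. By Lemma~\ref{dudDFD}(\ref{dudDFDO2}) every formula appearing in a run is a subformula of the input, so the set $R$ of registers and the set $Q$ of states of $\M_\Phi$ are of polynomial size. By the argument in Lemma~\ref{uusuD4ERR}, a configuration of a monotonic automaton must repeat after at most $|Q|\cdot 2^{|R|}$ steps, but since the registers are write-once and monotone, and since here there is no universal branching, an accepting computation is a single path along which the register set only grows; it therefore stabilises and reaches $f$ within a polynomial number of distinct configurations. A nondeterministic Turing machine can guess this path step by step, at each step guessing the next instruction and verifying its applicability ($S_1\subseteq S$) in polynomial time, using polynomial space to store the current configuration. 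This places the problem in \np.

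The main obstacle I expect is the polynomial-length bound on the accepting path, since a priori a monotonic automaton can take up to $n^2$ steps before a configuration repeats, which is already polynomial but must be combined with the claim that \emph{no} branching occurs. The key check is therefore to confirm, by inspecting the instructions generated in Proposition~\ref{wddoDH32}, that for a $\Types_{3-}$ input the WA case~(\ref{en:target}) only ever fires subcase~3a (atomic target matching an atomic $\TG\psi$), never 3b or 3c, and that case~(\ref{uwdiUDG}) is vacuous; then the $\jmp{\cdot}{\cdot}$ instructions simply never arise, the automaton $\M_\Phi$ is nondeterministic in the sense of Section~\ref{sec:three-minus}, and the second clause of Lemma~\ref{uusuD4ERR} applies verbatim to give membership in \np.
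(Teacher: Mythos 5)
Your overall strategy---show that $\Types_{3-}$ proof search is captured by a monotonic automaton with no universal branching and then invoke the second clause of Lemma~\ref{uusuD4ERR}---is exactly the paper's, and your middle paragraph contains the structural observation that actually makes it work. But the framing in your first and last paragraphs is wrong in a way that would derail the verification you propose. You claim that universal branching arises \emph{solely} from disjunctive targets, that the Ben-Yelles step ``never splits universally,'' and that it therefore suffices to check that the automaton $\M_\Phi$ of Proposition~\ref{wddoDH32} emits no $\jmp{\cdot}{\cdot}$ instructions on implicational input. This is false: the Ben-Yelles step picks $\rho_1\to\cdots\to\rho_n\to\varphi\in\Gamma$ and calls $\Gamma\vdash\rho_i$ for \emph{all} $i$, and all of these calls must succeed---that is universal (conjunctive) branching. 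Correspondingly, instruction 3a of Proposition~\ref{wddoDH32} is explicitly an \emph{abbreviation} for a chain of instructions over $m-2$ fresh states, and those are $\jmp{\cdot}{\cdot}$ instructions spawning the $m$ subgoals in parallel. If purely implicational input really eliminated universal branching in $\M_\Phi$, then Lemma~\ref{uusuD4ERR} would put all of \iipc in \np, contradicting Theorem~\ref{QEA2552}. So the check you defer to at the end (``confirm the $\jmp$ instructions never arise'') would fail.

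The repair is the point your second paragraph gestures at, and it is what the paper does: one does not reuse $\M_\Phi$ but builds a tailored automaton from the shape of long normal forms in an \np-context (atoms of $\tyv_0$ together with $\Types_{2-}$ formulas). In such a context no assumption has a target in $\tyv_0$, so a subgoal $p\in\tyv_0$ can only be closed by a variable already declared in $\Gamma$; these subgoals are therefore not recursive calls at all but membership tests, and they become the $\mbox{\tt check}$-precondition of a \emph{single} type-(1) instruction $q: \czek{p_1,\ldots,p_k,p'_1,\ldots,p'_\ell}{s_1,\ldots,s_m}{q'}$, whose {\tt set} part and jump target come from the unique $\Types_{1-}$ argument $s_1\to\cdots\to s_m\to q'$. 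This is where the $\tyv_0/\tyv_1$ separation is essential (as the paper's subsequent remark stresses), and it is the step your proposal needs to make explicit rather than the claim about $\M_\Phi$. Your concern about path length is moot once this is done: Lemma~\ref{uusuD4ERR} already handles it.
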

\begin{proof} 
Proving an implicational formula amounts to proving its target
in the context consisting of all its arguments. In the case of 
$\Types_{3-}$
we are interested in contexts built from atoms in~$\tyv_0$ and formulas
in $\Types_{2-}$ (some of those can be atoms in $\tyv_1$). Such contexts
\relax{are}
called \np-{\it contexts\/}.
If $\Gamma$ is an \np-context, and $\Gamma\vdash M:q$, with $M$ an lnf, 
then $M$ is either a variable or it has  the form 
\mbox{$M=XY_1Y_2\ldots Y_k(\lambda V_1\ldots V_m.\,N)Z_1\ldots Z_\ell$},
where:

-- the type of~$X$ is a $\Types_{2-}$ formula
of the form
$$
  p_1\to p_2\to\cdots\to p_k\to \alpha\to p'_1\to\cdots\to p'_\ell\to q;
$$

-- $Y_1\:p_1, Y_2\:p_2,\ldots, Y_k\:p_k, Z_1\:p'_1,\ldots, Z_\ell\:p'_\ell$
are declared in $\Gamma$;

-- the term $\lambda V_1\ldots V_m.\,N$ has a $\Types_{1-}$ type 
 $\alpha = s_1\to\cdots\to s_m\to q'$. 

We then have $\Gamma, V_1\:s_1,\ldots, V_m\:s_m\vdash N:q'$, with
$s_1,\ldots, s_m\in\tyv_0$ and  $q'\in\tyv_1$, and the context
$\Gamma, V_1\:s_1,\ldots, V_m\:s_m$ is an \np-context. 
In terms of a monotonic automaton this proof construction step amounts
to executing this instruction:
$$
  q: \czek{p_1,\ldots,p_k,p'_1,\ldots,p'_\ell}{s_1,\ldots, s_m}{q'}
$$

No other actions need to be performed by the automaton except a~final 
step, which takes up the form 
$q:\czek{q}{\pusty}{f}$,
where $f$ is a~final
state (this corresponds to the case of $M=X$). 

It follows that $\Types_{3-}$ proof search can be handled by 
a nondeterministic automaton (with no universal branching). 
By Lemma~\ref{uusuD4ERR} provability in $\Types_{3-}$ belongs to \np.
%
%
%
\end{proof}

\begin{remark} \rm To exclude universal branching, only one argument in 
a~$\Types_{2-}$ formula can be non-atomic. Note however that 
formulas used in the proof of Proposition~\ref{eyfeyg022} satisfy
a similar restriction. Hence 
the separation between ``data atoms'' $\tyv_0$ and ``control atoms'' 
in $\tyv_1$ is essential too.

Similarly, sole separation between ``data atoms'' and
  ``control atoms'' does not reduce the complexity either, as it
  directly corresponds to separation between registers 
  and states of automata.
%
\end{remark}


\newcommand{\nnn}{{n}}

\begin{lemma}
  \label{lemma:linearnphard}
  Provability in $\Types_{3-}$  is   \np-hard.
\end{lemma}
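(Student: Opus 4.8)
The plan is to reduce a standard $\np$-complete problem—say $3$-$\mathsf{SAT}$ or directly the satisfiability of a propositional formula—to provability in $\Types_{3-}$. Since $\Types_{3-}$ enforces a rigid separation between data variables $\tyv_0$ and control variables $\tyv_1$, and allows essentially one non-atomic argument per $\Types_{2-}$ formula (thereby excluding universal branching), I would design an encoding in which a nondeterministic guess of a satisfying assignment is simulated by the nondeterminism already present in proof search, while the verification of each clause is forced by the implicational structure. The target to be proved will be a single control variable, and the context of arguments will encode both the mechanism for guessing truth values and the clauses that must be satisfied.

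\emph{First I would} fix a SAT instance with variables $x_1,\ldots,x_n$ and clauses $C_1,\ldots,C_m$, and introduce data atoms in $\tyv_0$ recording committed truth values (for instance $t_i$ and $f_i$ signalling that $x_i$ has been set true or false), together with control atoms in $\tyv_1$ driving the proof-search ``program counter''. The guessing phase would be implemented by a chain of $\Types_{2-}$ formulas of the shape $(\,p_i \to g_{i+1}\,)\to g_i$ and $(\,n_i \to g_{i+1}\,)\to g_i$—two available assumptions per variable—so that in state $g_i$ the prover must choose exactly one, thereby nondeterministically committing $x_i$ to true or false and depositing the corresponding data atom, before moving on to $g_{i+1}$. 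Crucially, offering two alternative assumptions realizes existential (nondeterministic) choice, not universal branching, which is exactly what the $\Types_{3-}$ restriction permits.

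\emph{Then I would} handle verification. After all $n$ variables are guessed, the prover enters a ``check'' phase where, for each clause $C_j$, one provides assumptions (of order at most two, with data arguments only) encoding the literals that satisfy $C_j$: a clause is discharged as soon as one of its satisfying literals has its data atom present in the context. Chaining these checks from the final guessing state down to the goal atom $q\in\tyv_1$ forces every clause to be satisfied by the committed assignment, and conversely any satisfying assignment yields a sequence of proof-search steps reaching $q$. I would then verify that every formula used lies in $\Types_{3-}$: all targets are control atoms, all guessing formulas carry a single $\Types_{1-}$-style argument plus data arguments, and the clause-checking formulas have only $\tyv_0$ arguments. Appealing to the correspondence with nondeterministic monotonic automata made explicit in Lemma~\ref{lemma:linearinnp}, the configuration-level semantics confirms that $\Gamma\vdash q$ holds iff the guessed assignment satisfies all clauses.

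\emph{The hard part will be} arranging the control flow so that the prover \emph{cannot cheat}—in particular, ensuring the write-once (monotonic) nature of the registers does not let a variable be committed both true and false, and that a clause cannot be ``checked off'' unless a genuinely satisfying literal was committed. This amounts to a careful bookkeeping argument: I would prove a correctness lemma by induction on the guessing index, stating that in state $g_i$ with context $\Gamma$ extended by data atoms for $x_1,\ldots,x_{i-1}$, provability of $q$ is equivalent to the existence of a completion of the partial assignment satisfying all clauses. The base case is the final check phase and the inductive step is the two-way choice at variable $x_i$. Once this invariant is in place, applying it at $i=1$ with the empty data context gives $\vdash \Gamma\to q$ iff the instance is satisfiable, establishing $\np$-hardness; combined with Lemma~\ref{lemma:linearinnp} this yields $\np$-completeness of $\Types_{3-}$.
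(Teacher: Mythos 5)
Your proposal is correct and follows essentially the same route as the paper: a reduction from 3-CNF-SAT in which a chain of paired assumptions $(p_i\to q_{i+1})\to q_i$ and $(p'_i\to q_{i+1})\to q_i$ realizes the nondeterministic guess of an assignment by depositing data atoms, followed by a chained clause-checking phase, with correctness proved by an induction along the guessing chain. The only imprecision is cosmetic: the clause-checking assumptions must carry the next control atom as an (atomic $\Types_{1-}$) argument, as in $\rho_{ij}\to c_{j+1}\to c_j$, rather than ``data arguments only,'' and the proved goal is the \emph{first} control state of the chain, from which proof search proceeds forward through guessing and checking.
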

\begin{proof}
  We reduce the 3-CNF-SAT problem to provability in $\Types_{3-}$.
For every 3-CNF formula 
%
%
$$
        \Psi=(\lir_{11}\vee\lir_{12}\vee\lir_{13})\wedge \cdots \wedge
(\lir_{k1}\vee\lir_{k2}\vee\lir_{k3}), \qquad\quad (*)\relax{}
$$
where $\lir_{ij}$ are literals, we construct a $\Types_{3-}$ formula~$\psi$
so that \relax{$\Psi$} is  classically satisfiable \iff $\psi$ has a proof. 
\relax{Assume that $\{p_1,\ldots,p_{\nnn}\}$ are} all propositional variables occurring in \relax{$\Psi$},
and 
that $p_1,\ldots,p_{\nnn},p'_1,\ldots,p'_{\nnn}\in \tyv_0$.
Other atoms of
the formula 
$\psi$ are 
$q_1,\ldots,q_{\nnn},c_1,\ldots,c_{k}\in \tyv_1$.

Define $\rho_{ij}=p_\ell$ when $\lir_{ij}=p_\ell$, and
$\rho_{ij}=p'_\ell$ when $\lir_{ij}=\neg p_\ell$. 
The formula $\psi$ has the form $\Gamma\to q_1$, where $\Gamma$ consists
of the following axioms:

\begin{enumerate}
\item\label{kudfgI11}
 $(p_i \to q_{i+1}) \to q_i$ and $(p'_i \to q_{i+1}) \to q_i$, for 
$i=1,\ldots,\nnn-1$;\relax{}
\item\label{kudfgI22} $(p_{\nnn} \to c_1) \to q_{\nnn}$ and
  $\ (p'_{\nnn} \to c_1) \to q_{\nnn}$; \relax{}
\item\label{kudfgI33} $\rho_{i1}\to\ c_{i+1}\to c_i$, 
$\ \rho_{i2}\to c_{i+1}\to c_i$, and $\ \rho_{i3}\to c_{i+1}\to c_i$,
for $i=1,\ldots, k-1$;\relax{}
\item\label{kudfgI44}
  $\rho_{k1}\to c_{k}$,
  $\ \rho_{k2}\to c_{k}$, 
  and $\ \rho_{k3}\to c_{k}$.\relax{}
\end{enumerate}
For a binary valuation~$v$, 
let $\Delta_v$ be such that $p_i\in\Delta_v$ when $v(p_i)=1$ and
$p'_i\in\Delta_v$ otherwise. Suppose that \relax{$\Psi$} is satisfied by 
some~$v$. Then, for every $i$ there is $j$
with $\rho_{ij}\in \Delta_v$ and one can readily see that
$\Gamma,\Delta_v\vdash c_1$ using axioms (\ref{kudfgI44}) and (\ref{kudfgI33}).

Let $\Delta^i_v=\Delta_v\cap(\{p_j\ |\ j < i\}\cup\{p'_j\ |\ j < i\})$. 
Since  $\Gamma,\Delta_v\vdash c_1$ we obtain
$\Gamma,\Delta^{\nnn}_v\vdash q_{\nnn}$\relax{}
using (\ref{kudfgI22}), and then we use (\ref{kudfgI11}) to 
prove $\Gamma,\Delta^i_v\vdash q_i$ by induction, for
$\nnn-1\geq i\geq 1$.\relax{}
Since $\Delta^1_v=\pusty$, we eventually get $\Gamma\vdash q_1$.

For the converse, a proof of $\Gamma\vdash q_1$ in long
  normal form must begin with 
a~head variable of type $(p_1 \to q_2) \to q_1$ or
$(p'_1 \to q_2) \to q_1$ applied to an abstraction
$\lambda x.\,N$ with $N$ of type~$q_2$. The shape of $N$ is also 
determined by axioms (\ref{kudfgI11}--\ref{kudfgI22}), and it 
must inevitably contain a proof
 of  $\Gamma,\Delta_v\vdash c_1$ for some $v$.
Such a proof is only possible if each of the $k$ clauses
is satisfied by~$v$. The fun of checking the details is left to the reader.
\end{proof}

We can put together Lemma~\ref{lemma:linearinnp} and
Lemma~\ref{lemma:linearnphard} to obtain the conclusion of this section:
a very limited fragment of \iipc is of the same expressive power as SAT.
\begin{theorem}\label{jadiudgPwq}
  Proof search for $\Types_{3-}$ formulas 
is \np-complete.  
\end{theorem}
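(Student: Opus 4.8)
The plan is to assemble the two halves of the completeness claim that have already been established, since \np-completeness means simultaneously membership in \np and \np-hardness. The upper bound is supplied directly by Lemma~\ref{lemma:linearinnp}, which shows that proof search for $\Types_{3-}$ formulas can be carried out by a nondeterministic monotonic automaton without universal branching, placing the problem in \np by the nondeterministic case of Lemma~\ref{uusuD4ERR}. The lower bound is supplied by Lemma~\ref{lemma:linearnphard}, whose reduction from 3-CNF-SAT witnesses \np-hardness.

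First I would invoke Lemma~\ref{lemma:linearinnp} to record that provability in $\Types_{3-}$ belongs to \np. Then I would invoke Lemma~\ref{lemma:linearnphard} to record that the same problem is \np-hard under the (logspace, or polynomial-time) reductions implicit in the 3-CNF-SAT encoding of that lemma. Combining the two immediately yields that proof search for $\Types_{3-}$ formulas is \np-complete, completing the proof.

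There is no genuine obstacle at this stage: all the real work—verifying that the syntactic restriction on $\Types_{3-}$ forces the associated automaton to be nondeterministic, and constructing the hardness reduction so that the control/data separation of the atoms is respected—has been discharged in the preceding two lemmas. The only point worth a sentence of care is to confirm that the reduction class used for hardness (as furnished by Lemma~\ref{lemma:linearnphard}) is compatible with the resource bound used for membership, so that the resulting completeness statement is meaningful; this is routine, since both lemmas operate within the polynomial regime appropriate to \np.
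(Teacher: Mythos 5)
Your proposal is correct and matches the paper exactly: the theorem is obtained by combining Lemma~\ref{lemma:linearinnp} (membership in \np) with Lemma~\ref{lemma:linearnphard} (\np-hardness), which is precisely how the paper concludes.
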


\subsection{Order two plus}

We distinguish another natural class of formulas of low order 
for which the provability problem is \conp-complete.
We consider here implicational formulas built from
literals, and 
we restrict attention to formulas of order two,
where all literals are counted as of order zero.
We call this 
fragment {\it order two plus\/}. Note that if we use the standard
definition of order, these formulas are of order three.

It is convenient and illustrative 
to work with literals (using negation), but 
formulas of order two plus make in fact a fragment of \iipc.
Indeed, $\neg p = p\to\bot$ by definition, and in all our
proofs below the constant $\bot$ can be understood merely as 
a distinguished atom with no particular meaning. In other
words, the {\it ex falso\/} rule, i.e., $\bot$-elimination
is not involved.

\begin{lemma}\label{aduoQSada}
Formulas of order two plus have the linear size model property:
if $\,\nvdash\!\varphi$ then there is a~Kripke model of depth at most 
2 and of cardinality not exceeding the length of~$\varphi$.
\end{lemma}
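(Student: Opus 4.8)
The plan is to argue semantically through the completeness theorem and then compress a refuting model down to two levels. Write $\varphi=\beta_1\to\cdots\to\beta_n\to\ell$, where $\ell$ is a literal and, since $\varphi$ is of order two plus, each $\beta_i$ is either a literal or an implication of literals $\lis_1\to\cdots\to\lis_m\to\lis_0$. As $\nvdash\varphi$, the completeness theorem gives a finite Kripke model $\C$ with a state $r_0$ such that $r_0\forces\beta_i$ for every $i$ while $r_0\notforces\ell$. I will build from $\C$ a model $\M$ of depth at most two that refutes $\varphi$ at its root.

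First I would set up $\M$. Its root $r$ gets the valuation $V_r=\{p : r_0\forces p\}$. Above $r$ I place, for every variable $p$ of $\varphi$ that is \emph{undetermined} at $r_0$ (meaning $r_0\notforces p$ and $r_0\notforces\neg p$), one maximal successor $t_p$, chosen to be a maximal state of $\C$ with $t_p\geq r_0$ and $t_p\forces p$; such a state exists because $r_0\notforces\neg p$ forces $p$ to hold at some state above $r_0$, hence at a maximal one. Each $t_p$ keeps its classical valuation $V_{t_p}=\{q : t_p\forces q\}$, and $V_r\subseteq V_{t_p}$ since $t_p\geq r_0$, so $\M$ is a legal Kripke model; its successors are pairwise incomparable maximal points, so its depth is at most two, and it has one state per undetermined variable plus the root, which does not exceed the length of $\varphi$.

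The crux is the claim that $r$ and $r_0$ force exactly the same literals over the variables of $\varphi$. For a positive literal this is immediate from the definition of $V_r$. For a negative literal $\neg p$ it is exactly what dictates the choice of successors: if $r_0\forces\neg p$ then $p$ holds at no state above $r_0$, so neither the root nor any witness forces $p$ and hence $r\forces\neg p$; if $r_0\notforces\neg p$ then either $r_0\forces p$, so $p\in V_r$ and $r\notforces\neg p$ because $r$ itself forces $p$, or $p$ is undetermined and the witness $t_p$ forces $p$, so again $r\notforces\neg p$. The main obstacle the construction must overcome is precisely that collapsing to two levels alters the meaning of negation at non-maximal nodes; this is why genuine \emph{maximal} states of $\C$ --- where forcing is classical and therefore unchanged by the collapse --- are used as witnesses, and why exactly one is added for each undetermined variable.

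It then remains to verify that $\M,r$ refutes $\varphi$. Because $r$ and $r_0$ force the same literals, $r\notforces\ell$ and every literal premise $\beta_i$ is forced at $r$. For an order-one premise $\beta_i=\lis_1\to\cdots\to\lis_m\to\lis_0$ I would check the defining implication at each state of $\M$: at a witness $t_p$ it holds because $t_p\geq r_0$ is maximal in $\C$, so $r_0\forces\beta_i$ transfers there with its classical (hence unchanged) forcing; at $r$ it holds because $r$ forces the premises $\lis_1,\ldots,\lis_m$ iff $r_0$ does, and then $r_0\forces\lis_0$ yields $r\forces\lis_0$ by the claim. Here the order restriction is essential: it is what lets two levels suffice, since each premise is tested only against classical maximal points and against a root whose literal theory matches $r_0$. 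Thus $r\forces\beta_i$ for all $i$ while $r\notforces\ell$, so $\M$ is a refuting model of depth at most two whose size is bounded by the length of $\varphi$, as required.
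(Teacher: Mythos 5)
Your proof is correct and follows essentially the same route as the paper's: start from a finite countermodel given by completeness, keep the root with its valuation, and attach a linear number of maximal (final) states of the original model as the only successors, chosen precisely so that the negated antecedents that fail at the root keep failing after the collapse to depth two, the premises then being checked separately at the root and at the (classically behaving) maximal points. The only cosmetic difference is that you index the witnesses by undetermined variables, whereas the paper picks one final state per premise $\xi_i$; both choices yield the linear size bound.
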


\begin{proof}
Let $\varphi=\xi_1\to\cdots\to \xi_n\to\lip$, where 
$\xi_i = \liq^1_{\,i}\to\cdots \to \liq^{n_i}_{\,i}\to \lir_i$. 
Without loss of generality we may 
assume that
literals 
$\lip, \lir_1, \ldots, \lir_n$
are all either propositional variables or~$\bot$. 
Suppose that $\nvdash\varphi$. Then there exists a finite
Kripke model $C$ and a state $c_0$ of $C$ such that 
$C,c_0\notforces \varphi$. That is, $C,c_0\forces \xi_i$, for 
all $i=1,\ldots,n$, and $C,c_0\notforces \lip$. For every $i=1,\ldots,n$
we now select a~final state $c_i$ of~$C$ as follows. 
Since $C,c_0\forces \xi_i$, there are two possibilities: either
$C,c_0\forces \lir_i$, or
$C,c_0\notforces \liq_{\,i}^j$, for some~$j$.
The important case is when $C,c_0\notforces \liq^j$ and 
$\liq^j = \neg s$, for some propositional variable $s$.
Then there is a successor state $c'$ of $c_0$ with $C,c'\forces s$,
hence there also exists a final state forcing~$s$. We define $c_i$
as one of such final states. In other cases the choice of $c_i$
is irrelevant and we can choose any final state.

Now define a new model $C'$ with the set of states 
$\{c_0\}\cup\{c_1,\ldots,c_n\}$ and the relation 
$\forces$ inherited from~$C$,
i.e., $C',c\forces s$ iff $C,c\forces s$,
for any state $c$ of $C'$ and any propositional variable~$s$. Note
that so defined $C'$ has depth \relax{at} most 2.

We claim that $C',c_0\notforces \varphi$. Clearly $C',c_0\notforces \lip$,
so we should prove that all states in $C'$ force all formulas $\xi_i$.
Forcing in any state only depends on its successor states, hence
if we had $C,c_i\forces \xi_i$ then 
we still have $C',c_i\forces \xi_i$, for all $i=1,\ldots,n$, because 
nothing has changed at the final states. But also nothing has changed
at $c_0$ with respect to $\xi_i$. Indeed, if $C,c_0\forces\lir_i$
then $C',c_0\forces\lir_i$, and if $C,c_0\notforces \liq_{\,i}^j$ 
for some $j$, where~$\liq_{\,i}^j$ is a propositional variable,
then $C',c_0\notforces \liq_{\,i}^j$ as well.
Otherwise, for some $s,j$, we have $\neg s=\liq_{\,i}^j$ and
$C',c_i\forces s$, so $C',c_0\notforces \liq_{\,i}^j$.
\end{proof}

\begin{example}\rm Lemma~\ref{aduoQSada} cannot be improved to 
2-state models: the formula\\[0.5ex]
\mbox{~}\hfil $
(\neg p\to  q)\to  (\neg r\to  q)\to  (p\to  \neg r)\to  q  
$\\[0.5ex]
requires a countermodel with at least 3 states. 
\end{example}

\begin{theorem}\label{jadiudgQwq}
Order two plus fragment of \ipc is \conp-complete.
\end{theorem}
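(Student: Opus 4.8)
The plan is to prove \conp-completeness by showing both membership in \conp and \conp-hardness. Membership follows almost immediately from the linear-size model property established in Lemma~\ref{aduoQSada}: the complement problem ``$\varphi$ is not provable'' is in \np, since a nondeterministic machine can guess a Kripke model of depth at most~2 and cardinality at most $|\varphi|$, and then verify in polynomial time that $c_0 \notforces \varphi$. The verification is feasible because forcing of an order-two-plus formula at each state depends only on successor states, and with depth bounded by~2 the forcing relation can be computed bottom-up in polynomial time. Hence provability itself is in \conp.

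For the hardness part, **I would** reduce from the complement of 3-CNF-SAT, i.e.\ from the \conp-complete problem of deciding whether a 3-CNF formula is unsatisfiable (equivalently, whether a 3-DNF formula is a tautology, or whether the negation of a 3-CNF is valid). Given a 3-CNF formula $\Psi$ over variables $p_1,\ldots,p_n$, the goal is to construct an order-two-plus formula $\varphi_\Psi$ such that $\varphi_\Psi$ is intuitionistically provable if and only if $\Psi$ is unsatisfiable. By the model property of Lemma~\ref{aduoQSada}, $\varphi_\Psi$ fails to be provable exactly when there is a small countermodel, and the encoding should be arranged so that countermodels correspond precisely to satisfying assignments of $\Psi$. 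The natural idea is to use literals $p_i$ and $\neg p_i$ as the order-zero building blocks: a state of a Kripke countermodel commits to the truth of certain variables by forcing the corresponding positive literals, and the order-two premises of $\varphi_\Psi$ should encode, clause by clause, the requirement that each clause is satisfied, so that a countermodel can exist iff some valuation satisfies every clause.

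**The main obstacle** will be designing the encoding so that the forcing semantics of order-two-plus formulas in a depth-2 Kripke model faithfully captures Boolean satisfaction of $\Psi$, in particular handling the intuitionistic treatment of negation. Unlike the classical case, $\neg s$ is forced at a state $c$ iff \emph{no} successor of $c$ forces $s$, so a state must ``look ahead'' at its final successors; the construction has to exploit exactly the two-level structure permitted by Lemma~\ref{aduoQSada}. I would set up the root $c_0$ together with a family of final states, using each final state to witness the truth of one literal and arranging the premises $\xi_i$ of $\varphi_\Psi = \xi_1\to\cdots\to\xi_m\to\bot$ (or $\to q$) so that $c_0$ forces every $\xi_i$ precisely when the witnessed assignment satisfies every clause of $\Psi$, while $c_0 \notforces$ the target. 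The technically delicate point is ensuring the premises are genuinely of order two plus—each $\xi_i$ being an implication between literals or a literal—and verifying that the only possible countermodels are the intended ones, so that no ``spurious'' model can make $\varphi_\Psi$ unprovable when $\Psi$ is in fact unsatisfiable. Once the correspondence ``countermodel $\leftrightarrow$ satisfying valuation'' is pinned down, combining it with the \conp upper bound yields the theorem.
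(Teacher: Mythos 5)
Your upper bound is exactly the paper's: guess the depth-2, linear-size countermodel supplied by Lemma~\ref{aduoQSada} and verify forcing in polynomial time. Nothing to add there.

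The hardness half, however, stops short of the actual reduction, and the part you leave open is the part that carries the proof. You pick the right source problem (unsatisfiability of 3-CNF) and correctly name the obstacle ($c\forces\neg s$ iff no successor of $c$ forces $s$), but your ``natural idea'' of letting the literals $p_i$ and $\neg p_i$ themselves encode the commitment to a truth value does not survive that obstacle: a root state of a countermodel may force neither $p_i$ nor $\neg p_i$, so countermodels need not correspond to total Boolean valuations, and in a purely implicational language you cannot write $p_i\vee\neg p_i$ to force a decision (nor is excluded middle intuitionistically available even if you could). The missing gadget is \emph{variable doubling}: introduce a fresh atom $p'_i$ meaning ``$p_i$ is false'', translate each literal $\lir_{jm}$ into its dual atom $\lir'_{jm}$ (namely $p'_i$ for $p_i$, and $p_i$ for $\neg p_i$), and take as premises $X_i\:\neg p_i\to\neg p'_i\to\bot$ for each variable together with one clause axiom $Y_j\:\lir'_{j1}\to\lir'_{j2}\to\lir'_{j3}\to\bot$ per clause, the target being $\bot$. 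The $X_i$ are the intuitionistically available surrogate for $p_i\vee p'_i$ (they assert $\neg(\neg p_i\wedge\neg p'_i)$, which suffices because the goal is $\bot$), and they are exactly what lets a proof branch over the two truth values of a variable via the term $X_m(\lambda x_m\:p_m.\,M_1)(\lambda x_m\:p'_m.\,M_0)$; the $Y_j$ refute any total valuation falsifying clause~$j$. One must then prove $\Gamma_\Psi\vdash\bot$ iff $\Psi$ is unsatisfiable: the paper does ``unsatisfiable $\Rightarrow$ provable'' by reverse induction over partial valuations, and the converse by analysing shortest normal proofs (here your semantic route is fine --- a satisfying assignment, extended by $v(p'_i)=1-v(p_i)$, already yields a one-state classical countermodel). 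Until such an encoding is fixed and both directions checked, the reduction remains a statement of intent rather than a proof.
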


\begin{proof}
  That the problem is in \conp
  follows from Lemma~\ref{aduoQSada}: the small
countermodel can be guessed and verified in polynomial time. 

The \conp-hardness of order two plus is shown by a reduction from 
non-3-CNF-SAT. Let us begin with a formula in 3-CNF:

\hfil $\Psi = (\lir_{11}\vee\lir_{12}\vee\lir_{13})\wedge \cdots \wedge
(\lir_{k1}\vee\lir_{k2}\vee\lir_{k3})$,


\relax{where $\lir_{ij}$ are literals.}
Assume that $\{p_1,\ldots,p_n\}$ are all propositional variables 
in~$\Psi$. 
We define a~set $\Gamma_\Psi$ of formulas using propositional variables 
\relax{$p_1,\ldots, p_n, p'_1\ldots, p'_n$}.
For any literal $\lir_{jm}$ occurring 
in $\Psi$ we write \relax{$\lir'_{jm}$} to denote:

-- the variable \relax{$p'_i$}, when $\lir_{jm}=p_i$;

-- the variable ${p}_i$, when $\lir_{jm}= \neg p_i$.

Members of $\Gamma_\Psi$ are as follows
(for all $i=1,\ldots,n$ and  $j=1,\ldots,k$):

-- $X_i \: \neg p_i\to \neg p'_i\to\bot $; \relax{}

-- $Y_j \: \lir'_{j1}\to \lir'_{j2}\to \lir'_{j3}\to\bot$.\relax{}

For example, if the first component of $\Psi$ was $(p\vee\neg q\vee\neg s)$
then $Y_1:
p'\to q\to s\to\bot$.\relax{}
We shall prove that:
\begin{quote}
\hfil $\Psi$ is classically unsatisfiable ~\iff ~ $\Gamma_\Psi\vdash \bot$.  
\end{quote}


$(\To)$~ Let $m\leq n$ and let ${v}$
be a Boolean valuation 
of variables $p_1,\ldots,p_m$. Define an environment $\Gamma_{v} = 
\Gamma_\Psi\cup\{x_1\:  
p_1^{{v}},\ldots x_m\: p_m^{{v}}\}$,
\relax{}
where 

\hfil $ p_i^{{v}}=\przypadki{ p_i}{${v}(p_i)=1$}{p'_i}$ \relax{}

By a reverse induction~\wrt~$m$ we 
prove that 
$\Gamma_{v}\vdash\bot$, for every such~${v}$. 
We~begin with $m=n$. Then ${v}$ is
defined on all variables in $\Psi$ and does not satisfy~$\Psi$.
Therefore the value under ${v}$ of 
at least one clause  $\lir_{j1}\vee\lir_{j2}\vee\lir_{j3}$
is zero, in which case we have 
$\lir'_{j1}, \lir'_{j2}, \lir'_{j3}\in\Gamma_{v}$,\relax{}
hence $\bot$ is derivable using the assumption~$Y_j$. (For example, if
the unsatisfied component of $\Psi$ were $(p\vee\neg q\vee\neg s)$  
then we would have $p^{v}= p'$, 
$q^{v}= q$, $s^{v}= s$.)\relax{}

For the induction step assume the claim holds for some $m\leq n$, and let 
${v}$ be a valuation of $p_1,\ldots,p_{m-1}$. For $b=0,1$, 
define 
${v}_b$ as ${v}$ extended by ${v}_b(p_m)=b$.
By the induction hypothesis there are proofs $\Gamma_{{v}_0}\vdash
M_0:\bot$ and $\Gamma_{{v}_1}\vdash M_1:\bot$. 
Then one proves $\bot$ from $\Gamma_{v}$ 
using the assumption $X_m$; the lambda term in question has the form
$X_m(\lambda x_m\:p_m.\,M_1)
(\lambda x_m\: p'_m.\,M_0)$.\relax{}

$(\Ot)$~ By contraposition suppose that
  $v$ satisfies
  $\Psi$.  We extend it to primed propositional variables by letting
  $v(p') = 1- v(p)$.  Since
  $v$ satisfies all the clauses
  $\lir_{11}\vee\lir_{12}\vee\lir_{13}$ of $\Psi$,
  it satisfies all the formulas in $\Gamma_\Psi$. Consequently,
  $\Gamma\not\vdash\bot$ even in classical logic.

For any 
given Boolean 
valuation ${v}$ of $p_1,\ldots,p_n$, we 
prove that ${v}$ does not satisfy~$\Psi$. 
Let again $\Gamma_{v}= 
\Gamma_\Psi\cup\{x_1\: p_1^{{v}},\ldots, 
x_n\:p_n^{{v}}\}$.\relax{}
Since $\Gamma_\Psi \vdash\bot$, also $\Gamma_{v} \vdash\bot$, so let $M$
be the shortest possible normal lambda-term such that 
$\Gamma_{v} \vdash M:\bot$. The proof must begin with either some~\relax{$X_i$}
or some $Y_j$. In the first case it must be of the form
$M= X_i(\lambda y_i\:p_i.\,M_1)%
       (\lambda y_i\: p'_i.\,M_0)$,\relax{}
where $\Gamma_{v}, y_i\:,p_i\vdash M_1:\bot$ 
and $\Gamma_{v}, y_i\: p'_i\vdash M_0:\bot$.
But in $\Gamma_{v}$ we have either 
$x_i\:p_i$ or $x_i\: p'_i$.\relax{} Thus either 
$M_1[y_i:=x_i]$ or $M_0[y_i:=x_i]$ 
makes a proof of~$\bot$ shorter than~$M$.

It follows that the shortest proof of~$\bot$ is of the form
$M=Y_jN_1N_2N_3$, where $\Gamma_{v}\vdash N_1: \lir'_{j1}$, 
$\Gamma_{v}\vdash N_2 :\lir'_{j2}$, and 
$\Gamma_{v}\vdash N_3 :\lir'_{j3}$. Then $N_1, N_2, N_3$ must
be variables declared in $\Gamma_{v}$ which is only possible when
the literals $\lir_{j1}, \lir_{j2}, \lir_{j3}$ are zero-valued under~${v}$.
\end{proof}

\section{Conclusions and further research}
\label{sec:conclusions}

We have demonstrated the strength of implicational intuitionistic
propositional logic (\iipc) as a reasonable language
to express problems
solvable in \pspace.
%
%
%
%
%
%
Moreover, some natural subclasses of \iipc, called {\it order three minus\/}
and {\it order two plus\/}, 
correspond respectively to 
complexity classes \np and \conp
(Section~\ref{sec:low-rank}). 



\relax{ The situation in \iipc can be related to the one in modal
  logic S4 through the standard embedding
  \cite{TarskiMcKinsey48} (see~\cite{Bou04} for a modern account
  of the embedding). Each subsequent order corresponds through
  this embedding to one application of the modal operator. 
In particular, formulas of order three in \iipc translate to
formulas of modal depth
four. 
Interestingly enough, satisfiability for
  S4 formulas already of modal depth $k\geq 2$ is \pspace-complete
  \cite[Theorem 4.2]{Halpern95}.


\bibliographystyle{asl}
\bibliography{aspukz}

\end{document}